\newcommand{\I}{\iota}
\newcommand{\R}{\mathbb{R}}
\newcommand{\hx}{\hat{x}}
\newcommand{\E}{\mathbb{E}}
\newcommand{\X}{\mathcal{X}}
\newcommand{\SNR}{\text{SNR}}
\newcommand{\rev}[1]{{\color{black}{#1}}}
\newtheorem{thm}{Theorem}
\numberwithin{equation}{section}
\numberwithin{thm}{section} 
\newtheorem{conj}[thm]{Conjecture}
\newtheorem{cor}[thm]{Corollary}
\newtheorem{prop}[thm]{Proposition}
\newtheorem{lemma}[thm]{Lemma}
\begin{document}
\title{Super-resolution multi-reference alignment}
\author[1]{Tamir Bendory}
\author[2]{Ariel Jaffe}
\author[3]{William Leeb}
\author[4]{Nir Sharon}
\author[5]{Amit Singer}
\affil[1]{School of Electrical Engineering, Tel Aviv University}
\affil[2]{Applied Mathematics Program, Yale University}
\affil[3]{School of Mathematics, University of Minnesota, Twin Cities}
\affil[4]{School of Mathematical Sciences, Tel Aviv University}
\affil[5]{Department of Mathematics and Program in Applied and Computational Mathematics, Princeton University}

\maketitle

\begin{abstract}
We study super-resolution multi-reference alignment, the problem of estimating a signal from many circularly shifted, down-sampled, and noisy observations. We focus on the low SNR regime, and show that a signal in $\R^M$ is uniquely determined when the number $L$ of samples {per observation} is of the order of the square root of the signal's length ($L=O(\sqrt{M})$). Phrased more informally, one can square the resolution. This result holds if the number of observations is proportional to  $1/\SNR^3$. In contrast, with fewer observations recovery is impossible even when the observations are not down-sampled ($L=M$). The analysis combines tools from statistical signal processing and invariant theory. We design an expectation-maximization algorithm and demonstrate that it can super-resolve the signal in challenging SNR regimes.
\end{abstract}


\section{Introduction}
\paragraph{Model.} We study the problem of estimating a signal from its circularly shifted, sampled, and noisy copies. 
More precisely, we  consider $N$ independent observations sampled from the model
\begin{equation} \label{eq:model}
y = PR_s x + \varepsilon, \qquad  s\sim \text{Uniform}[0,\ldots,M-1],\qquad \varepsilon\sim \mathcal{N}(0,\sigma^2 I),
\end{equation}
where $R_s$ denotes an operator that  shifts the target signal $x\in\R^M$ circularly by $s$ entries, that is, $(R_s x)[n]= x[(n-s)\bmod M]$, and   $P$ denotes a fixed sampling operator that collects $L\leq M$ equally-spaced samples. 
We assume that the random variable $s$ is  distributed uniformly {over $[0, \ldots, M-1]$}, and the noise $\varepsilon\in \R^L$  is   i.i.d.\ Gaussian.
Explicitly, the $i$-th observation reads: 
\begin{eqnarray} 
y_i[\ell] &=& P(R_{s_i}x)[\ell] + \varepsilon_i[\ell]
\nonumber \\ &=& x[\ell K-s_i]+\varepsilon_i[\ell],
\end{eqnarray}
where $\ell=0,\ldots,L-1$, and 
$K:=M/L$ is assumed to be an integer. 
{Importantly, the shifts $s_i$ are all unknown, and thus~\eqref{eq:model} is a special case of the multi-reference alignment (MRA) model, which we review in Section~\ref{sec:background}.}
{Figure~\ref{fig:measurement_example} presents an example of two observations with signal-to-noise ratio (SNR) \rev{equal to  one} (namely, the expected squared norm of the noise equals the squared norm of the signal).}

Our goal is to estimate $x$ from $N$ observations sampled from~\eqref{eq:model}.
{In contrast to previous works on MRA, the individual observations are down-sampled, and therefore recovering the full signal $x$ is also a special case of the super-resolution problem. Accordingly,} we refer to $x$ as the ``high-resolution signal,'' while $y_1,\ldots,y_N$ are the ``low-resolution observations.'' The parameter~$K$ can be thought of as a ``super-resolution factor.''
The difficulty in estimating $x$  resides chiefly in three factors: the additive noise, the unknown circular shifts (the nuisance variables of the problem), and the sampling operator. 

The statistical model~\eqref{eq:model} suffers from an intrinsic symmetry: it is invariant  under a {global} circular shift since 
{$p(y|x) = p(y|R_ix)$ for all $i=1,\ldots,M-1$.}
 In this case, we say that the goal is to recover the signal up to a {global} circular shift. More formally, the goal is to recover \emph{the orbit of~$x$}: 
\begin{equation} \label{eq:orbit_x}
G x:=\{gx \,|\,g\in G\},
\end{equation}
where $G:=\{R_0,R_1,\ldots,R_{M-1} \}$ is the group of cyclic shifts $\mathbb{Z}_M$.
However, as will be shown \rev{in Section~\ref{sec:analysis}}, without prior information on the signal, even the  orbit $Gx$ is not  identifiable from the observations, and thus prior information on $x$ is \emph{necessary} for its identification. 

\paragraph{Connection with sampling theory.} We think of the discrete signal $x\in \R^M$ as   Nyquist-rate samples of a continuous bandlimited 
signal.
Specifically, let us define a {real} signal with bandlimit~$B$ as
\begin{equation} \label{eq:continuous_signal}
x_c(t) = \sum_{k=-B}^B \hx[k]e^{2\pi \I kt}, \qquad t\in[0,1),
\end{equation}
where {$\I = \sqrt{-1}$, \rev{and $\hx$ denotes the Fourier series coefficients of $x$.}  Since $x_c$ is real, it follows that $\hx[k]=\overline{\hx[-k]}$.}
According to the well-known Shannon-Nyquist {sampling} theorem, the samples
\begin{equation}  \label{eq:continuous_signal_sampled}
\tilde x_c[m] := x_c(m/M)  = \sum_{k=-B}^B \hx[k]e^{2\pi \I km/M}, \qquad m=0,\ldots,M-1,
\end{equation}
characterize  $x_c$ uniquely when $M\geq 2B+1$. 
Model~\eqref{eq:model} is identical to rotating the discrete signal~$\tilde x_c$ on the $M$-point grid $\{m/M\}_{m=0}^{M-1}$, sampling it $L$ times, and adding noise.

With the above interpretation in mind, we identify the length of the signal in~\eqref{eq:model} with twice the signal's bandwidth, namely, $M=2B+1\approx 2B$. Thus, if $L<M$, we say that each observation is sampled \emph{below the signal's Nyquist rate}, and thus the recovery process should compensate for an aliasing distortion. To avoid aliasing, the standard signal processing approach {in many applications} is to remove the {high frequency components} before sampling~\cite{schafer1989discrete,eldar2015sampling}|namely, low-passfiltering the signal| and   then estimate a down-sampled, smooth approximation of $x$.
While this strategy is generally optimal for a single observation, this is not necessarily true when multiple observations are available. In this work,  we show that if sufficiently many observations are acquired (as a function of the noise level), then in principle it suffices to acquire only $L= O(\sqrt{M})$ samples at each observation to recover  high-resolution details,  even if the circular shifts are unknown and the noise level might be  high. 

The analogy between~\eqref{eq:model} and rotating a continuous bandlimited signal~\eqref{eq:continuous_signal} holds only when the rotations are restricted to the grid $\{m/M\}_{m=0}^{M-1}$. In Section~\ref{sec:future_work} we discuss  potential extensions to more intricate models that permit rotations over a continuous interval.

\begin{figure}
	\begin{subfigure}[ht]{0.48\columnwidth}
		\centering
		\includegraphics[width=\columnwidth]{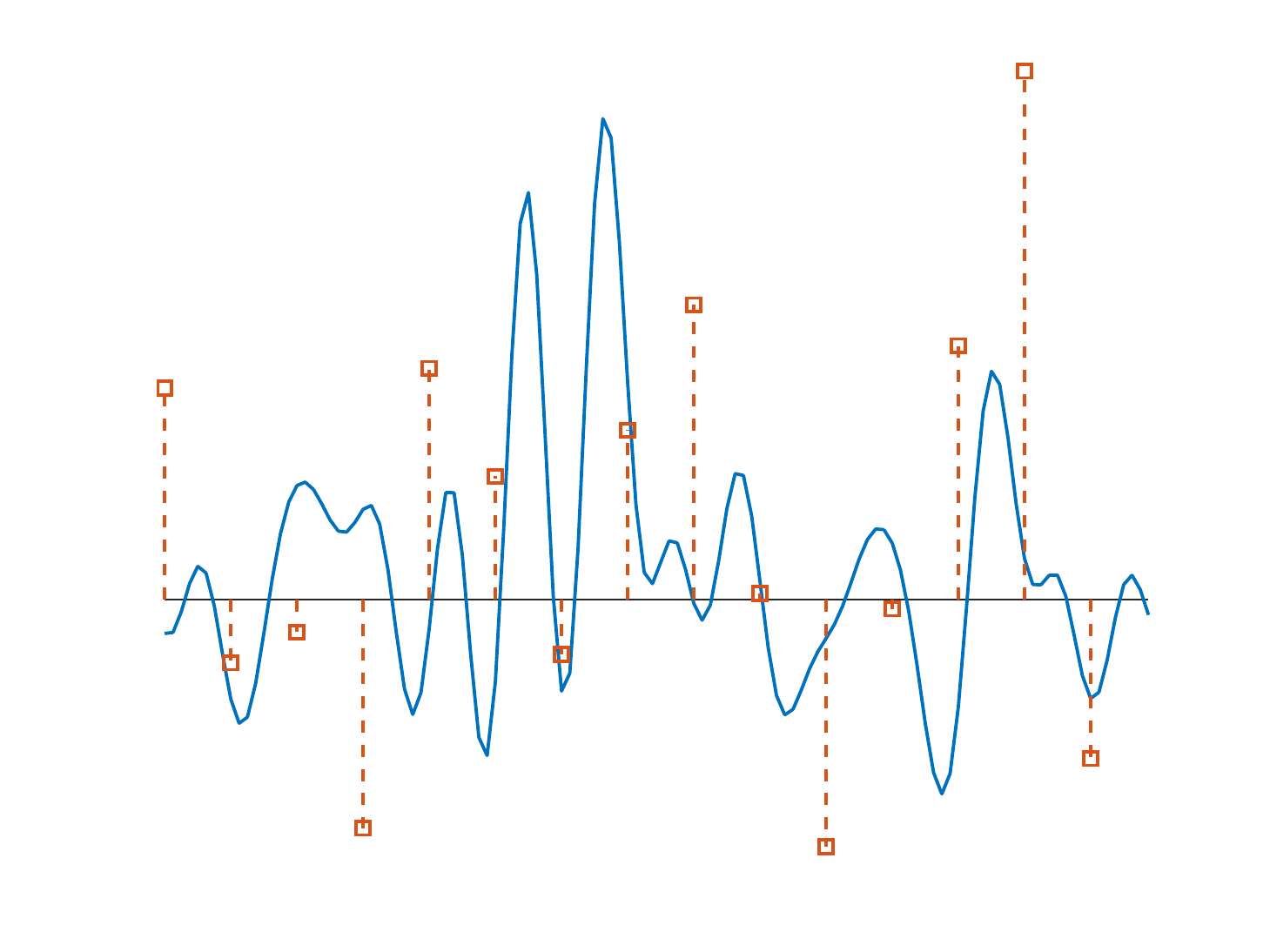}
		\caption{}
	\end{subfigure}
	\hfill
	\begin{subfigure}[ht]{0.48\columnwidth}
		\centering
		\includegraphics[width=\columnwidth]{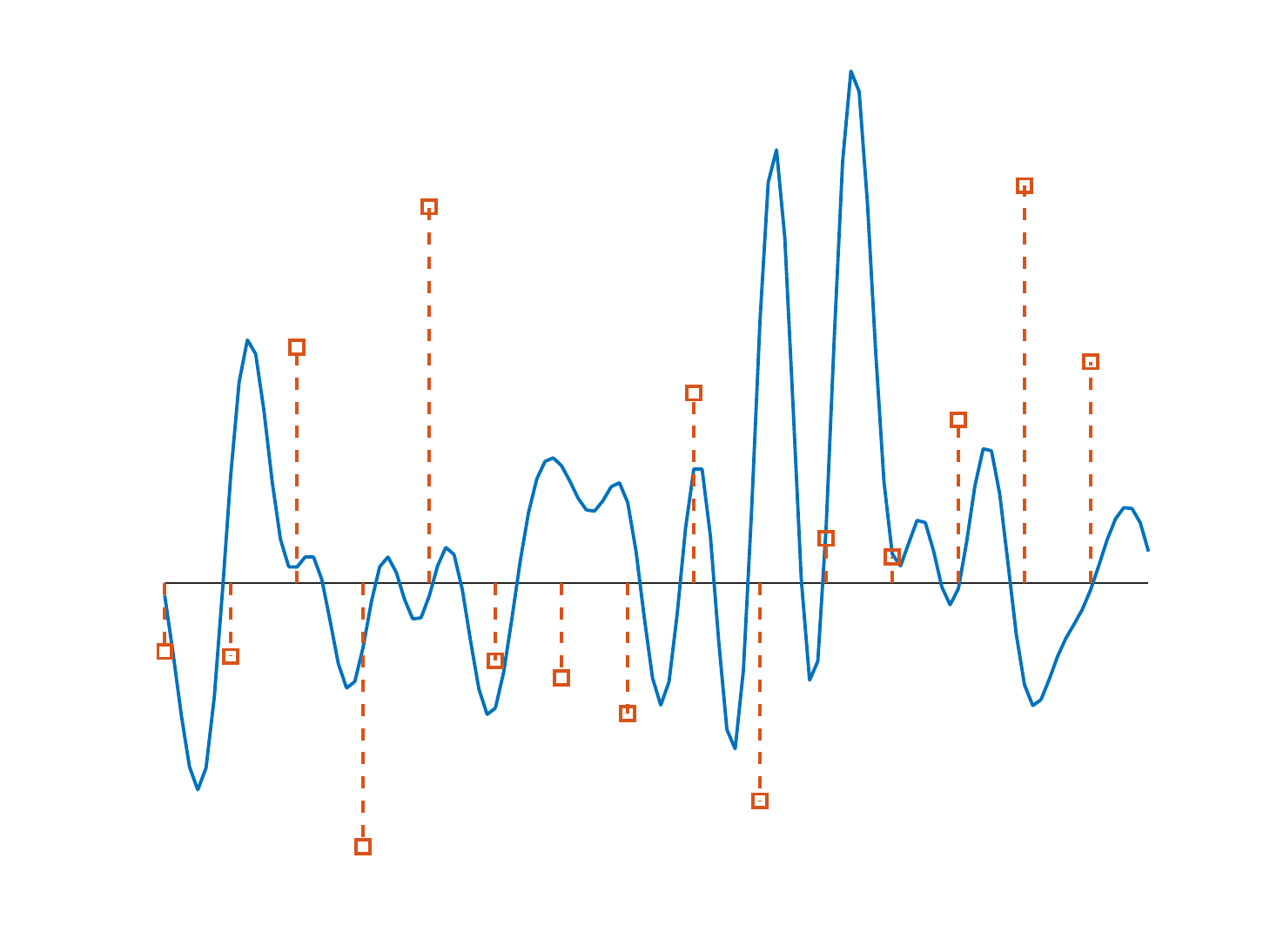}
		\caption{\label{fig:example_noisy}}
	\end{subfigure}
	\caption{	\label{fig:measurement_example}Two shifted copies of a signal  of length  $M=120$ (the high-resolution signal) are presented in blue. 
		 The red squares  display $L=15$ noisy samples   with SNR \rev{equal to one}. The goal is to estimate the high-resolution signal from {multiple noisy observations}.}
\end{figure}

\paragraph{Super-resolution.} The model~\eqref{eq:model}   is an instance of the \emph{super-resolution from multiple observations} problem: the task of estimating the fine details of a signal from its low-resolution observations. This problem has attracted the attention of numerous researchers  in the last couple of decades in a variety of fields, such as computer vision, image processing, and medical imaging; see for instance~\cite{park2003super,farsiu2004advances, greenspan2008super} and references therein.
\rev{In particular, the statistical model in some of these works is akin to~\eqref{eq:model}, see for example~\cite{robinson2006statistical,robinson2009optimal,woods2005stochastic}.}
Nevertheless, as far as we know, previous works on super-resolution did not aim to  derive and quantify the achievable super-resolution in the low SNR regime.
To avoid confusion, we mention that there exists a different thread of research, which is not directly related to this work,  that studies {super-resolution from a single image} based on prior knowledge (such as  sparsity~\cite{candes2014towards,bendory2016robust}),
or machine learning techniques~\cite{kim2016accurate,lim2017enhanced}. 

\begin{figure}[ht]
	\centering
	\includegraphics[width=.8\linewidth]{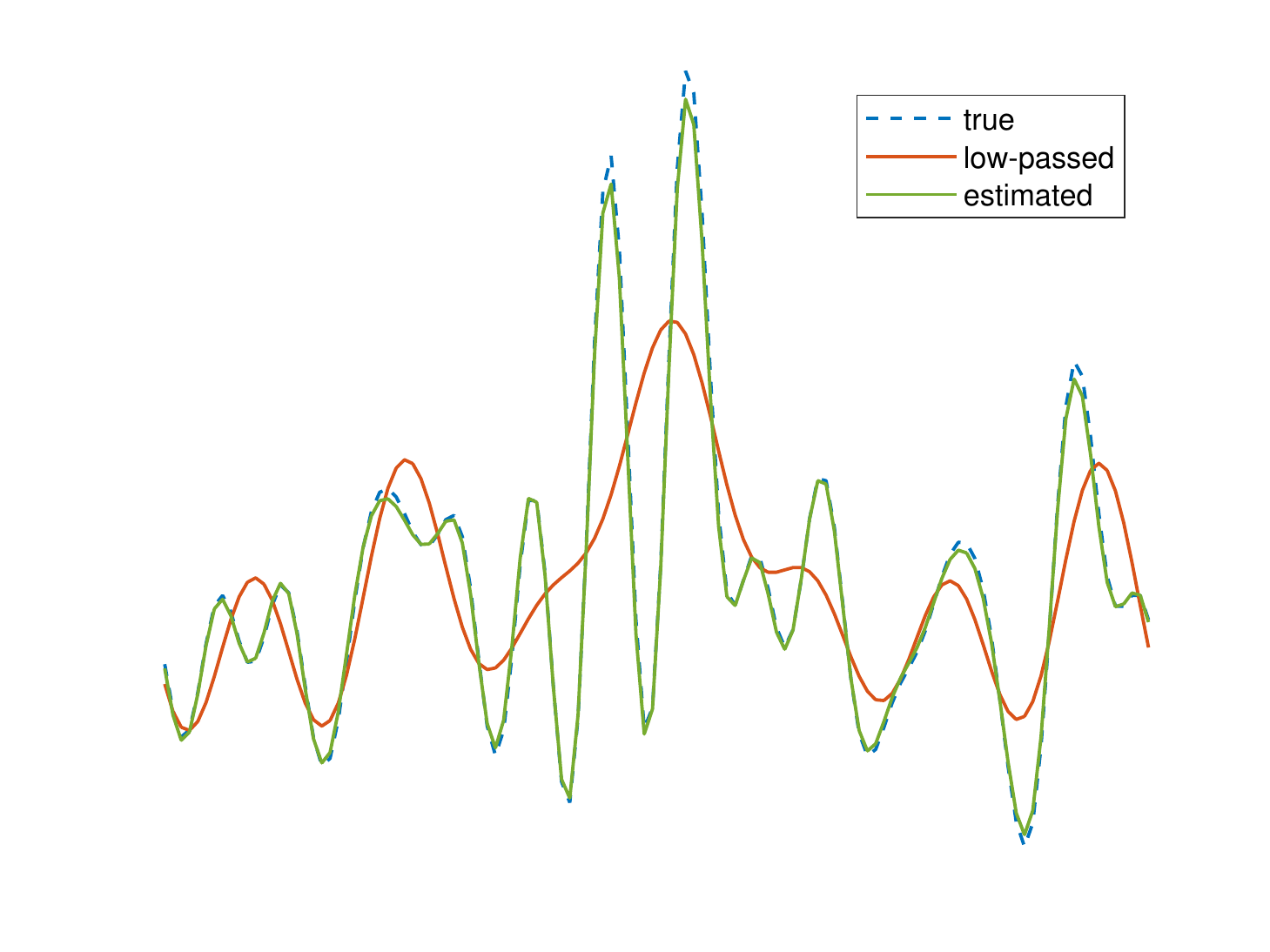}
	\caption{\label{fig:recovery_example} An example of an accurate estimate when the SNR is equal to 1. In the experiment, $N = 10^4$ observations were generated from a signal of length $M=120$ (plotted in dashed blue; the same one  as in  Figure~\ref{fig:measurement_example}). 
	The bandwidth of the signal is $B=15$ and it was sampled $L=15$ times at each observation---half of the Nyquist sampling rate.  
	The classical signal processing approach suggests to remove all frequencies beyond $L/2$ and then process the low-resolution data. This low-passed version of the signal is presented in red. Notably, the two peaks in the center of the signal are blurred and merged into one. In contrast, the EM algorithm resolves the two peaks and estimates the high-resolution signal accurately (in green).
		}
\end{figure}

\paragraph{Main contributions.}
In this paper we provide a detailed analysis of  model~\eqref{eq:model} and derive  fundamental conditions permitting an accurate estimate of $x$. In particular, we characterize the interplay between the number of observations $N$, the noise level $\sigma$, the signal's length $M$, and  the number of samples per observation $L$, in the low SNR regime.

The following \rev{theorem summarizes (informally)} the theoretical contribution of this paper. 
Precise formulations and technical details are provided in Section~\ref{sec:analysis}.

\rev{
%
\begin{thm}[informal] \leavevmode
		Suppose that $N$ observations from~\eqref{eq:model} are collected and $\sigma\to\infty$. If $N/\sigma^6\to\infty$ and $L\geq  C\sqrt{M}$ \rev{for some constant $C$},  then the maximum of the likelihood function~$p(y_1,\ldots,y_N|x)$ is attained by  a finite set of signals that {includes} the target signal~$x$. 
		If in addition  $x$ was drawn from 
a Gaussian prior, then almost surely there exists a single signal that achieves the maximum of the posterior distribution~$p(x|y_1,\ldots,y_N)$. 
\end{thm}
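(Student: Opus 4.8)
The plan is to recognize this as a method-of-moments statement in disguise. Integrating out the uniform shift in~\eqref{eq:model} and expanding $\log p(y\,|\,x')$ in powers of $1/\sigma^2$ (with $x'$ the optimization variable), the Kullback--Leibler divergence between the laws $p(\cdot\,|\,x)$ and $p(\cdot\,|\,x')$ is, in the regime $\sigma\to\infty$, controlled by the lowest moment at which the two signals disagree: if their moments agree through order $d-1$ but differ at order $d$, the divergence decays like $\sigma^{-2d}$. First I would make this expansion precise and pair it with a uniform concentration estimate over signals, to show that once $N/\sigma^6\to\infty$ the empirical first, second, and third moments of the observations are resolved with vanishing error. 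Consequently the set of global maximizers of $p(y_1,\ldots,y_N\,|\,x')$ converges to the set $S(x)$ of signals whose mean, subsampled autocorrelation, and subsampled triple correlation coincide with those of $x$; the exponent $6=2\cdot 3$ is precisely the fingerprint that the third moment is the first one carrying enough information to pin the signal down, and the presence of $x$ itself in $S(x)$ is immediate.

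The crux is then purely algebraic: I must show that $S(x)$ is finite. In the Fourier domain the shift acts by phase multiplication, while the sampling operator $P$ aliases the $M$ frequencies of $x$ into $L$ observed bands. The second moment fixes only the $O(L)$ aliased power-spectrum values $\sum_{j}|\hx[\ell+jL]|^2$, whereas the third moment supplies $O(L^2)$ polynomial relations among the aliased Fourier coefficients; against the $O(M)$ real unknowns of $x$ (a magnitude and phase per coefficient, modulo the global shift), the hypothesis $L\geq C\sqrt{M}$ is exactly what makes $L^2$ a constant multiple of $M$ and the bispectral system overdetermined. The step I expect to be the hardest is to promote this dimension count into a genuine finiteness theorem: one must certify that the polynomial moment map $x'\mapsto(m_1,m_2,m_3)$ is generically finite-to-one, equivalently that its differential at $x$ is injective modulo the one-dimensional direction generated by the infinitesimal shift, so that no positive-dimensional fibre survives. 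This is where invariant theory enters, to show that the aliased triple-correlation invariants separate generic orbits up to finitely many candidates; the coupling of distant frequencies introduced by aliasing is the delicate feature absent from the classical, fully-sampled MRA analysis, and controlling it for $L$ as small as $\sqrt{M}$ is the technical heart of the argument.

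For the posterior statement I would exploit the degeneracy just established. The likelihood is invariant under the global shift, so $S(x)$ is a union of $\mathbb{Z}_M$-orbits, and in the limit the likelihood is asymptotically constant on $S(x)$ and strictly smaller off it; hence the maximizer of $p(x'\,|\,y_1,\ldots,y_N)\propto p(y_1,\ldots,y_N\,|\,x')\,p(x')$ lies in $S(x)$ and is the element there of largest prior mass. Uniqueness thus reduces to showing that the Gaussian prior density takes distinct values on the finitely many points of $S(x)$. Note that all these points share the norm $\|x\|$, since the observed autocorrelation already fixes $r[0]=\|x\|^2/M$; the tie-breaking therefore cannot come from the norm alone and genuinely uses that a generic Gaussian density is not constant along an orbit (its level sets are ellipsoids rather than spheres), which is why a Gaussian rather than a shift-invariant prior is imposed. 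Finally, since $S(x)$ is an algebraic function of $x$ and $x$ is drawn from an absolutely continuous Gaussian law, the exceptional event that two points of $S(x)$ receive equal prior density is confined to a proper algebraic subvariety and hence has probability zero, yielding the claimed almost-sure uniqueness of the posterior maximizer.
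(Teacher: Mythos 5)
Your high-level architecture---moment matching in the limit $N/\sigma^6\to\infty$, followed by tie-breaking among the finitely many moment-matching signals via the prior---is the same as the paper's, and your reading of the exponent $6=2\cdot 3$ is correct. However, there are two genuine gaps. The first is the finiteness/identifiability step, which you yourself flag as the technical heart but then leave as a restatement of the problem (``show the aliased moment map is generically finite-to-one''). The paper's treatment of this step rests on a structural observation you never make: since $K=M/L$ is an integer, two shifts differing by a multiple of $K$ produce observations of the same decimated sub-signal, so every observation is a circularly shifted noisy copy of one of the $K$ sub-signals $x_k[\ell]=x[k+K\ell]$ of~\eqref{eq:sub_signals}; hence~\eqref{eq:model} is exactly heterogeneous MRA~\eqref{eq:heter_mra}, and the empirical invariants converge to the uniform mixture of the sub-signals' mean, power spectrum, and bispectrum as in~\eqref{eq:mix_invariants}. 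This converts the aliasing problem into bispectrum demixing of $K$ signals in $\R^L$, for which existing Hessian-test results apply (Proposition~\ref{prop1}): for $L\leq 192$ and $K<\mathcal{P}(L)\approx L/6$ as in~\eqref{eq:Pl}, generic sub-signals are determined up to the finite group $G_{\Pi,L}$ of $K!L^K$ per-sub-signal shifts and permutations. That is exactly where the $L\gtrsim\sqrt{6M}$ threshold, the finiteness of the maximizer set, and its precise description as the orbit $G_{\Pi,L}x$ come from; note that this orbit reflects far more symmetry than the global $\mathbb{Z}_M$-shifts you identify (independent shifts of each sub-signal and arbitrary permutations of the sub-signals), a structure your proposal never uncovers. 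Two further cautions: in this discrete-shift model the symmetry group is finite, so there is no ``infinitesimal shift'' direction and generic finiteness requires a full-rank Jacobian; and even the paper does not prove the rank condition analytically---it is a computational verification valid only for $L\leq 192$, conjectural beyond---so a from-scratch proof of your finiteness claim for all $L\geq C\sqrt{M}$ would exceed what is currently known.

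The second gap is in the posterior-uniqueness argument. You assert that the event that two points of $S(x)$ receive equal prior density ``is confined to a proper algebraic subvariety,'' but properness is exactly what must be proved, and it is false for some Gaussian priors: if $\Sigma^{-1}$ commutes with $R_2R_1^T$ for some pair of the relevant permutations---e.g.\ an isotropic $\Sigma$, or a circulant (shift-invariant) $\Sigma$ with $R_2R_1^T$ a global cyclic shift---then $\norm{R_1z}_{\Sigma^{-1}}^2=\norm{R_2z}_{\Sigma^{-1}}^2$ holds identically in $z$, and the posterior has multiple maximizers for \emph{every} $x$, not just on a null set. This is the content of Lemma~\ref{lemma:unique_perm} (measure zero if and only if $\Sigma^{-1}$ and $R_2R_1^T$ do not commute); it is why the paper's formal statements require ``almost any Gaussian prior,'' and why circulant priors get the separate, weaker Proposition~\ref{prop:symmetry_in_circulant_sigma} (uniqueness only up to a global shift). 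Your ellipsoids-versus-spheres remark gestures at the right issue, but non-isotropy is not sufficient: circulant covariances are non-isotropic yet shift-invariant. To close the gap you must add the non-commutation hypothesis on $\Sigma^{-1}$ (generic among covariance matrices) and only then run the measure-zero argument over $x$, which is precisely Corollary~\ref{cor:prior}.
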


}

\paragraph{Expectation-maximization.} 
As a computational scheme, we propose to retrieve the high-resolution signal $x$ from the low-resolution observations $y_1,\ldots,y_N$ {using} an expectation-maximization (EM) algorithm; a detailed description {is given} in Section~\ref{sec:EM}. 
Figure~\ref{fig:recovery_example} shows a  numerical example.  A high-resolution signal of length $M=120$ is estimated  from $N=10^4$ observations in a noisy environment, {where the SNR is equal to one and each observation} is sampled at $L=15$ points. 
The bandwidth of the signal is $B=L$, so that the sampling rate is half of the Nyquist rate.  
If we were to follow the Shannon-Nyquist sampling scheme of filtering out the $L/2$ high frequencies, the two peaks in the center of the signal would have been blurred into one, even with known circular shifts and in the absence of noise.
In contrast, the EM algorithm resolves the  two adjacent peaks and estimates the signal accurately. A detailed description of this simulation, and additional numerical experiments, are provided in Section~\ref{sec:numerics}.
\rev{We note, however, that while the theoretical analysis  guarantees identifiability in the regime $M=O(L^2)$, in our experiments the EM algorithm fails to  estimate the high-resolution signal even when $L\approx M^{2/3}$.}
   Following~\cite{bandeira2017estimation,boumal2018heterogeneous,weinthesis}, we postulate that this inadequate  performance reflects a fundamental statistical-computational gap in the super-resolution problem, rather than a shortcoming of the EM framework.

\paragraph{Remark on terminology and notation.}
We refer to each {realization} of the model~\eqref{eq:model} as an observation, and to the entries of each observation as samples. 
Namely, $y_i[\ell]$ denotes the $\ell$-th sample of the $i$-th observation.
In addition, in the sequel all indices should be considered as modulo $M$ or $L$, depending on the context. 
When writing $P \gtrsim Q^d$ \rev{($\lesssim)$} we mean that $P$ is greater \rev{(smaller)} than~$Q^d$ plus a polynomial of degree $d-1$.
\rev{For example, $L\lesssim\sqrt{6M}$ implies that $M\gtrsim L^2/6 + C_1L + C_0$ for some constants $C_0$ and $C_1$.} 

\section{Background on multi-reference alignment and invariants} \label{sec:background}

The model~\eqref{eq:model} is a special case of the \emph{multi-reference alignment} (MRA) problem.
This problem entails estimating a signal  from multiple noisy observations; in each observation the signal is acted upon by an unknown element of a known group $G$. 
In its most general form, the MRA model reads
\begin{equation} \label{eq:mra}
y = T(g\circ x) +\varepsilon, \qquad g\in G, \, x\in\X,
\end{equation}
where $T$ is a known linear operator,
{with the group $G$ acting on a  vector space $\X$~\cite{bandeira2015non}.}
{Specifically,} if $x\in\R ^M$, $G$ is identified with the group of circular shifts $\mathbb{Z}_M$, and $T$ is the  sampling operator $P$, then the general MRA model~\eqref{eq:mra} reduces to~\eqref{eq:model}. 

Similarly to many MRA models in the literature~\cite{bandeira2014multireference,bendory2017bispectrum,perry2017sample,abbe2018multireference,boumal2018heterogeneous,ma2018heterogeneous,abbe2018estimation,aizenbud2019rank,romanov2020multi},  this work is  inspired by single-particle reconstruction problems using cryo-electron microscopy (cryo-EM) and X-ray free electron lasers (XFEL)---high-resolution structural methods for biological macromolecules~\cite{frank2006three,gaffney2007imaging,nogales2015cryo,singer2018mathematics,bendory2019single}. 
In particular, this work is a first step towards understanding the resolution limits of these modalities; see further discussion in Section~\ref{sec:future_work}.

Suppose we collect $N$ observations from~\eqref{eq:mra}. 
If the noise level is low, the standard approach is to estimate the group element $g_1,\ldots,g_N$.
For example, in~\eqref{eq:model}  the unknown circular shifts $s_1,\ldots,s_N$ can be estimated by simultaneous clustering  and synchronization (see Section~\ref{sec:hMRA}). This can be done, for instance, using the Non-Unique Games framework~\cite{lederman2019representation}. 
However, in the low SNR regime---which is the main interest of this work---the group elements cannot be recovered reliably by any method~\cite{aguerrebere2016fundamental,bendory2019multi}.
Therefore,  we consider two techniques that circumvent shift  determination:  estimation based on shift-invariant features, and the EM algorithm.
In particular, we formulate  EM in detail in Section~\ref{sec:EM}, and present numerical experiments  in Section~\ref{sec:numerics}. 

For the theoretical analysis,  we use features that are invariant {under} circular shifts.
Specifically, the $q$-th order circular-shift invariant feature of a signal $z\in\R^L $ is simply its auto-correlation:
\begin{equation} \label{eq:autocorrelations}
M_q(z)[\ell_1,\ldots,\ell_{q-1}]=\sum_{i=0}^{L-1} z[i]z[i+\ell_1]\ldots z[i+\ell_{q-1}].
\end{equation} 
It is readily seen that this quantity remains unchanged under any circular shift of $z$, namely, $M_q(z)=M_q(R_{\bar{s}} z)$ for any fixed $\bar{s}$. 
These invariants can also be presented in  Fourier domain.
Specifically, let ${\hat{z}[k]}$ denote the $k$-th Fourier coefficient of $z$. Then the 
\rev{polynomials}
\begin{equation}
\hat{M}_q(z)[k_1,\ldots,k_{q-1}]=\hat{z}[k_1]\cdots \hat{z}[k_{q-1}]{\hat{z}[-k_1-\cdots-k_{q-1}]},
\end{equation} 
are also invariant under circular shift. 
\rev{Throughout the paper, we use the terms auto-correlations, invariants, and invariant features interchangeably.}
Using these invariants, a variety of algorithms were proposed under different MRA setups~\cite{bendory2017bispectrum,perry2017sample,abbe2018multireference,boumal2018heterogeneous,chen2018spectral,ma2018heterogeneous}, as well as for  cryo-EM and XFEL~\cite{kam1980reconstruction,kurta2017correlations,levin20173d,bendory2018toward,pande2018ab,von2018structure,sharon2019method}.

In this work, we harness the first three invariants. The first invariant is the zero frequency $\hat{M}_1(z) = \hat{z}[0]$ (equivalently, the mean of the signal). The second invariant is the power spectrum of the signal $\hat{M}_2(z)[k]=|z[k]|^2$ for $k=0,\ldots,L-1$. Unfortunately, the mean and the power spectrum do not determine a general signal uniquely (see for example~\cite{bendory2017fourier}). 
Thus, we need the third-order invariant, the \emph{bispectrum}, which determines almost all signals uniquely~\cite{tukey1953spectral,sadler1992shift}:
\begin{equation} \label{sec:bispectrum}
\hat{M}_3(z)[k_1,k_2] = \rev{\hat{z}[k_1]\hat{z}[k_2]\hat{z}[-k_1-k_2]}, \qquad k_1,k_2=0,\ldots,L-1.
\end{equation}
The bispectrum is a useful tool in many data processing applications, such as separating Gaussian and non-Gaussian processes~\cite{brockett1988bispectral}, studying the cosmic background
radiation, seismic, radar and EEG signals~\cite{wang2000cosmic,chen2008feature,ning1989bispectral}, MIMO systems~\cite{chen2001frequency}, and classification~\cite{zhao2014rotationally}.  

For large $\sigma$, the variance of estimating  the $q$-th order auto-correlation \rev{(either ${M}_q$ or $\hat{M}_q$)} is proportional to~$\sigma^{2q}$ since the estimator involves the product of $q$ noisy terms. Thus, reliable estimation requires  an order of $\sigma^{2q}$ observations. For the problem under consideration, it implies that we need to record $N/\sigma^6\gg 1$ observations to obtain an accurate estimate of the bispectrum.
Interestingly, it was shown that for the MRA model~\eqref{eq:mra}, the invariant features approach 
is optimal in the following sense. Let $\bar{q}$ be the lowest-order auto-correlation that identifies a generic signal (in our case, $\bar{q}=3$). Then, in the asymptotic regime where $N$ and $\sigma$ diverge (while $L$ is fixed), the estimation error of any method is bounded away from zero if  $N/\sigma^{2\bar{q}}$ is bounded from above~\cite{bandeira2017optimal,abbe2018estimation}. 
In other words, $\bar{q}$ determines the minimal number of observations required for an accurate estimate in the low SNR regime.
Remarkably, we show that for~\eqref{eq:model} \rev{and for a certain range of $L$}, at the same estimation rate (i.e., $N$ scales with $\sigma^6$) one can reduce the sampling rate significantly below the Nyquist rate and still achieve an accurate estimate of the signal.
In Section~\ref{sec:future_work} we discuss the potential of super-resolution in case higher-order auto-correlations could be computed---that is, if more observations are available.

\section{Analysis} \label{sec:analysis}
\rev{The analysis is carried out in the asymptotic regime of $N\to\infty$, while the dimension $M$ remains fixed. Therefore, we assume, without of loss of generality, that $\|x\|^2=M$ so that SNR=$1/\sigma^2$.  By the term ``accurate recovery'' we mean that the recovery error drops to zero almost surely. For example, the condition $N/\sigma^6\to\infty$ ensures that we can almost surely estimate the bispectrum accurately.} 

In Section~\ref{sec:hMRA}, we show that~\eqref{eq:model} can be interpreted as the \emph{heterogeneous multi-reference alignment} (hMRA) model applied to $K$ subsets of $x$, and formulate the likelihood function of~\eqref{eq:model}. This, in turn, immediately implies that the signal is not determined uniquely from the likelihood function \rev{(a result implicit in earlier works, such as \cite{woods2005stochastic}):}
	
	\rev{	\begin{thm} 
\label{thm:nonunique}
			The likelihood function $p(y_1,\ldots,y_N|x)$ does not determine $x$  uniquely, neither its orbit under $\mathbb{Z}_M$~\eqref{eq:orbit_x}.
		\end{thm}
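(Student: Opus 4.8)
The plan is to exploit the decomposition of the model into $K$ independent cyclic-MRA problems, one per polyphase (decimation) component of $x$, which Section~\ref{sec:hMRA} establishes. Writing a shift as $s = aK + b$ with $0\le b < K$ and $0 \le a < L$, the sampled positions $\{\ell K - s \bmod M\}_{\ell=0}^{L-1}$ are exactly the $L$ indices congruent to $-b \bmod K$, read off in a cyclic order governed by $a$. Hence, as $s$ ranges uniformly over $\mathbb{Z}_M$, the value $b$ selects uniformly one of the $K$ decimated subsignals $x_r[j] := x[jK + r]\in\R^L$ (up to relabeling of $r$), while $a$ performs a uniform cyclic shift in $\mathbb{Z}_L$ of that subsignal. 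I would therefore first record the resulting factorization of the single-observation likelihood,
\[
p(y\mid x) = \frac1K\sum_{r=0}^{K-1} p_L(y\mid x_r), \qquad p_L(y\mid z) := \frac{1}{L(2\pi\sigma^2)^{L/2}}\sum_{q=0}^{L-1}\exp\!\left(-\frac{\|y - R_q z\|^2}{2\sigma^2}\right),
\]
where $R_q$ now denotes the length-$L$ cyclic shift and $p_L$ is the ordinary length-$L$ MRA likelihood. Since the observations are i.i.d., the full likelihood is $\prod_i p(y_i\mid x)$, so it depends on $x$ only through the unordered list $(x_0,\ldots,x_{K-1})$ of its subsignals.

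The second step is to invoke the per-component shift invariance $p_L(y\mid z) = p_L(y\mid R_t z)$ for every $t\in\mathbb{Z}_L$, which holds because the inner sum over $q$ is merely reindexed. Consequently the likelihood is invariant under shifting each subsignal \emph{independently}: if $x'$ is the signal whose decimated components are $x'_r = R_{t_r} x_r$ for arbitrary $t_0,\ldots,t_{K-1}\in\mathbb{Z}_L$, then $p(y\mid x') = p(y\mid x)$ for every $y$, and the two likelihood functions coincide identically. This symmetry group is $\mathbb{Z}_L^K$ (together with the freedom to permute the equal-weight components, the wreath-type group $S_K\ltimes\mathbb{Z}_L^K$), which is strictly larger than the global-shift group $\mathbb{Z}_M$ whenever $K\ge 2$: a global shift of $x$ moves every component by a \emph{common} amount and cyclically permutes the component labels, so it realizes only the ``diagonal'' subgroup.

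Concretely, I would take $x'$ to be $x$ with a single component shifted, say $x'_0 = R_1 x_0$ and $x'_r = x_r$ for $r\ge 1$. By the previous step $p(\cdot\mid x')\equiv p(\cdot\mid x)$, so the likelihood cannot separate $x$ from $x'$. It then remains to check that $x'$ lies in a different $\mathbb{Z}_M$-orbit than $x$, which is the crux. Suppose $x' = R_s x$ with $s = aK+b$; decimating both sides shows that $R_s$ sends component $r$ to (old) component $(r-b)\bmod K$, shifted by a common amount $a$ (raised to $a+1$ on the wrapped-around entries when $b\ne 0$). For generic $x$ the $K$ components have pairwise distinct $\mathbb{Z}_L$-orbits, which forces the index permutation to be trivial, i.e.\ $b=0$; matching component $0$ then requires $a\equiv 1$ while matching component $1$ (which exists because $K\ge2$) requires $a\equiv 0$ in $\mathbb{Z}_L$, a contradiction. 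Hence $Gx'\ne Gx$, proving that neither $x$ nor its orbit is determined by the likelihood.

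I expect the orbit-separation step to be the main obstacle: the likelihood equality is immediate once the factorization is in hand, but ruling out that the ``illegal'' component shift accidentally coincides with a genuine global shift requires the polyphase index-and-shift bookkeeping plus a genericity hypothesis (equivalently, one may simply fix an explicit $x$ whose components have distinct orbits). One should also note the degenerate regime $K=1$ (that is, $L=M$), where the extra symmetry collapses and the statement in fact fails, confirming that the non-uniqueness is genuinely a consequence of the down-sampling.
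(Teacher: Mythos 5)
Your proposal is correct and follows essentially the same route as the paper: the sub-signal (polyphase) decomposition of Section~\ref{sec:hMRA}, the marginalized-likelihood factorization~\eqref{eq:likelihood}, and the observation that the likelihood is invariant under the group $G_{\Pi,L}$ of independent per-component shifts and permutations, which strictly contains the image of $\mathbb{Z}_M$ (acting as a common shift plus cyclic relabeling) whenever $K\geq 2$. The only difference is one of thoroughness, not of method: you explicitly carry out the orbit-separation bookkeeping (showing that shifting a single component of a generic $x$ exits its $\mathbb{Z}_M$-orbit, modulo the additional genericity requirement that the components be aperiodic), a step the paper treats as immediate.
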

	}
 
	Nevertheless, \rev{the likelihood function} allows us to identify  a  family of signals which can be described as the orbit of~$x$ under a  parameterized sub-group of the permutation group; we denote this orbit by $G_{\Pi,L} x$ from reasons that will be explained later.  	
	Our analysis consists of two stages: identifying the orbit $G_{\Pi,L} x$ from the observations, and finding a unique signal in $G_{\Pi,L} x$  that maximizes the posterior distribution.  In particular, in Section~\ref{sec:recover_orbit} we use auto-correlation analysis to show that \rev{for any $L\leq 192$ satisfying $L\gtrsim \sqrt{6M}$ (more accurately, any pair $(L,M)$ satisfying~\eqref{eq:Pl})},  the orbit $G_{\Pi,L} x$ can be computed from the first three  auto-correlations of $y$\rev{; we conjecture it remains true for any $L>192$}. These  auto-correlations  can be estimated from the data  if $N/\sigma^6\to\infty$ in any SNR regime. Finally,  in Section~\ref{sec:merge_signals} we show that if~$x$ was drawn from almost any Gaussian prior on the signal, then there is a unique signal in~$G_{\Pi,L} x$ that maximizes the posterior distribution.	 
	
	\rev{The following summarizes the main results of this section:}
\rev{	
	\begin{thm} 		
	Suppose that $N\to\infty$ observations from~\eqref{eq:model} are collected,
	$N/\sigma^6\to\infty$, and that~$x$ was drawn from almost any Gaussian prior. 
	Then, for $L\leq 192$ and any $K$ that satisfies~\eqref{eq:Pl}, there exists a single signal that achieves the maximum of the posterior distribution $p(x|y_1,\ldots,y_N)$.
	\end{thm}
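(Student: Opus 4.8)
The plan is to combine the two stages of the section. Writing the posterior through Bayes' rule as $p(x\mid y_1,\ldots,y_N)\propto p(y_1,\ldots,y_N\mid x)\,p(x)$, the whole argument hinges on the fact that the likelihood is \emph{constant} along the orbit $G_{\Pi,L}x$, so that the prior is the only term that can distinguish its members. I would therefore first certify that the likelihood maximizers form precisely the finite orbit $G_{\Pi,L}x$, and then show that the Gaussian prior singles out one point of this orbit almost surely.

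For the first stage I would invoke the orbit-recovery analysis of Section~\ref{sec:recover_orbit}. Because $N/\sigma^6\to\infty$, the empirical first three auto-correlations of the $y_i$ converge almost surely to their population values, and under the hypotheses $L\leq 192$ together with a pair $(L,M)$ satisfying~\eqref{eq:Pl} these invariants determine the orbit $G_{\Pi,L}x$ and nothing finer. Since the auto-correlations are functionals of the likelihood, constant on $G_{\Pi,L}x$ and separating it from every other orbit, the maximum of $p(y_1,\ldots,y_N\mid x)$ is attained, in the limit, exactly on the finite set $G_{\Pi,L}x$.

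For the second stage I would reduce posterior maximization to prior maximization on this finite set: as $N\to\infty$ the log-likelihood gap between $G_{\Pi,L}x$ and its complement diverges while the log-prior stays bounded, so the posterior maximizer must lie in $G_{\Pi,L}x$; and since the likelihood is constant there, maximizing the posterior is the same as maximizing $p(gx)$ over $g\in G_{\Pi,L}$. For a prior $\mathcal{N}(\mu,\Sigma)$ this is the same as minimizing $(gx-\mu)^\top\Sigma^{-1}(gx-\mu)$ over the finitely many $g$.

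The hard part, and the crux of the result, is to show that this minimizer is unique for almost every $x$. For each pair $g_1\neq g_2$ in $G_{\Pi,L}$, using $g^\top=g^{-1}$ for permutation matrices, the signals on which the two orbit points tie are the zeros of the quadratic polynomial
\begin{equation*}
x^\top\bigl(g_1^{-1}\Sigma^{-1}g_1-g_2^{-1}\Sigma^{-1}g_2\bigr)x-2\mu^\top\Sigma^{-1}(g_1-g_2)x.
\end{equation*}
I would argue that for almost any prior $(\mu,\Sigma)$ this polynomial is not identically zero, since its quadratic part vanishes only when $\Sigma^{-1}$ commutes with $g_2g_1^{-1}$, which happens only on a measure-zero set of covariances. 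Its zero set is then the vanishing locus of a nonzero polynomial and hence has Lebesgue measure zero (it already contains the subspace $g_1x=g_2x$ on which two orbit points coincide). Taking the union over the finitely many pairs produces a measure-zero exceptional set, and because the Gaussian prior is absolutely continuous with respect to Lebesgue measure, a signal $x$ drawn from it avoids this set almost surely; consequently the quadratic form has a unique minimizer and the posterior a unique maximizer.
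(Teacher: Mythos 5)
Your proposal is correct and follows essentially the same two-stage route as the paper: consistent estimation of the first three invariants under $N/\sigma^6\to\infty$ plus the Hessian-test identifiability result (Proposition~\ref{prop1}, valid for $L\leq 192$ and $K$ satisfying~\eqref{eq:Pl}) to pin down the orbit $G_{\Pi,L}x$ on which the likelihood is constant, followed by prior tie-breaking on that finite orbit, where a tie between $g_1x$ and $g_2x$ is non-degenerate exactly when $\Sigma^{-1}$ fails to commute with $g_2g_1^{-1}$ (the paper's Lemma~\ref{lemma:unique_perm} and Corollary~\ref{cor:prior}). Your only departures are minor: you allow a nonzero mean $\mu$ and invoke the general fact that a nonzero polynomial has a Lebesgue-null vanishing locus (the paper proves the quadratic case by hand via diagonalization and Fubini), and you shift the genericity onto the prior $(\mu,\Sigma)$ itself, which is consistent with the theorem's ``almost any Gaussian prior'' hypothesis and with the paper's separate treatment (Proposition~\ref{prop:symmetry_in_circulant_sigma}) of the exceptional circulant case.
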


	\begin{conj} 		
	Suppose that $N\to\infty$ observations from~\eqref{eq:model} are collected,
	$N/\sigma^6\to\infty$, and that~$x$ was drawn from almost any Gaussian prior. 
	Then, for any fixed $M$, there exists a single signal that achieves the maximum of the posterior distribution $p(x|y_1,\ldots,y_N)$ as long as $M\leq L\cdot\mathcal{P}(L)$, where $\mathcal{P}(L)$ is given in~\eqref{eq:Pl}.
	\end{conj}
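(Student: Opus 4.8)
The plan is to upgrade the orbit-recovery step of Section~\ref{sec:recover_orbit} from the finite range $L\le 192$ to all $L$ satisfying~\eqref{eq:Pl}; the subsequent merging step of Section~\ref{sec:merge_signals}, which selects a single maximizer of the posterior inside the finite orbit $G_{\Pi,L}x$ using the Gaussian prior, is independent of $L$ and transfers verbatim once the orbit has been shown to be recoverable. Thus the entire content of the conjecture reduces to the following statement, to be established for \emph{every} admissible $(L,M)$: the first three averaged auto-correlations of $y$ determine $x$ up to $G_{\Pi,L}$ for generic $x$, provided $K=M/L\le\mathcal P(L)$.

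First I would recast the measured invariants as aliased versions of the full-length invariants of $x$. Averaging the observation auto-correlations over the uniform shift collapses the cross-coset phases (orthogonality of the $K$-point DFT in the coset index), so that the averaged power spectrum returns the alias sums $\sum_{m}|\hx[\omega+mL]|^2$ and the averaged bispectrum returns the two-dimensional alias sum $\sum_{m_1,m_2}\hx[\omega_1+m_1L]\,\hx[\omega_2+m_2L]\,\overline{\hx[\omega_1+\omega_2+(m_1+m_2)L]}$ of the full bispectrum, for $\omega,\omega_1,\omega_2\in\{0,\dots,L-1\}$. The reconstruction target is then $\hx$ modulo the ambiguity group $G_{\Pi,L}$, and the counting that produces~\eqref{eq:Pl} is exactly the requirement that the number of symmetry-reduced aliased-bispectrum equations, about $L^2/6$ after accounting for the bispectral symmetries, exceed the number of unknowns $M=KL$, i.e.\ $K\lesssim L/6$.

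Generic identifiability would then be proved in two parts. For the local part I would show that the polynomial map $\hx\mapsto(\text{aliased power spectrum},\text{aliased bispectrum})$ has differential of full column rank $M-\dim G_{\Pi,L}$ at a generic point; by lower semicontinuity of rank it suffices to exhibit, for each $L$, a single point at which the Jacobian attains this rank. The aim is to construct such points in a closed form that is uniform in $L$---for instance from a structured, Vandermonde-type choice of $\hx$---and to order the frequencies so that the Jacobian becomes block-triangular with invertible diagonal blocks, which would certify full rank for all $L$ simultaneously and replace the case-by-case symbolic rank checks that currently cap the theorem at $L\le192$. For the global part I would run a block frequency-marching recursion: fix $\hx[0]$ from the mean and the magnitudes within each alias group from the power spectrum, then propagate the unknown phases across $L$-spaced frequency columns using the aliased bispectrum, one alias group at a time, identifying the residual freedom at each step as an independent coset shift together with a permutation of the $K$ cosets---precisely the generators of $G_{\Pi,L}$.

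The hard part is the global step, and it is exactly where a uniform-in-$L$ argument is delicate: because each aliased equation couples the $K$ coset unknowns nonlinearly, the marching recursion is not a clean triangular solve as in single-signal bispectrum inversion, and one must rule out spurious finite solution branches whose number can a priori grow with $K$. For a fixed $L$ this is a finite computation (a resultant, Gr\"obner, or Jacobian-rank certificate at a random rational point), which is what underlies the verification up to $L=192$; turning it into a proof for all $L$ requires either the explicit full-rank Jacobian family described above, an induction on $K$ (or on $L$) lifting identifiability from smaller instances, or a transcendence-degree computation showing that the field generated by the aliased invariants has dimension $M-\dim G_{\Pi,L}$ for every admissible $(L,M)$. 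I expect the induction/closed-form-Jacobian route to be the most promising, and the control of the spurious branches to be the principal obstacle separating this conjecture from the already-proven theorem.
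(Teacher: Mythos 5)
You are attempting to prove a statement that the paper itself does not prove: it is explicitly labeled a conjecture, and the paper's support for it consists only of (i) the computational Hessian-test verification in the framework of \cite{bandeira2017estimation}, executed for all $L\le 192$ (Proposition~\ref{prop1}), (ii) the parameter-counting observation that the bound~\eqref{eq:Pl} matches the number of distinct bispectrum entries ($\approx L^2/6$) against the $KL$ unknowns, and (iii) the prior-based selection step of Section~\ref{sec:merge_signals} (Lemma~\ref{lemma:unique_perm} and Proposition~\ref{prop:symmetry_in_circulant_sigma}), which is indeed independent of $L$. Your reduction mirrors this architecture faithfully: you correctly note that the merging step transfers verbatim, and that the whole content is generic orbit identifiability from the first three mixed invariants for every admissible pair $(L,M)$. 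Your aliased-invariant formulation is also legitimate and equivalent, by DFT orthogonality across the $K$ cosets, to the paper's sub-signal mixture formulation~\eqref{eq:mix_invariants}.

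However, your proposal does not close the gap; it restates it. Two concrete problems. First, the local step is insufficient even if executed: full column rank of the Jacobian of the invariant map at a single witness point certifies, within the framework of \cite{bandeira2017estimation}, generic determination only up to a finite list of orbits, and your uniform-in-$L$ Vandermonde-type witness with a block-triangular Jacobian is hypothesized rather than constructed---no explicit point, ordering of frequencies, or invertibility of the diagonal blocks is exhibited. Second, the global step---propagating phases across alias groups while excluding spurious solution branches whose number may grow with $K$---is precisely the open content of the conjecture (inherited from the conjecture of \cite{bandeira2017estimation,boumal2018heterogeneous}), and you state explicitly that it remains unresolved. What you have produced is a correct identification of the problem structure, the right reduction, and three candidate attack routes (closed-form Jacobian family, induction on $K$ or $L$, transcendence-degree computation), none of which is carried out. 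The attempt therefore fails at exactly the step where the paper itself stops, which is why the paper presents this statement as a conjecture rather than a theorem.
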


}

\subsection{Reduction to heterogeneous MRA and the likelihood function} \label{sec:hMRA}
Consider two realizations  $y_i,y_j$ generated, respectively, after shifting $x$ by $s_i$ and $s_j$, and recall that $K=M/L$ is an integer.  If $s_i - s_j = cK$ for some integer $c$, then $y_i$ is equal to a circular shift of $y_j$, with a different noise realization. 
It follows that any observation $y_i$ is a noisy and circularly-shifted realization of one of the following $K$ signals, 
\begin{equation}\label{eq:sub_signals}
x_k := \left[x[k], \,x[k+K], \,x[k+2K], \ldots,\, x[k+(L-1)K]\right], \qquad k=0,\ldots,K-1.
\end{equation}
Namely,  {$x_k[\ell] = x[k+K\ell]$} for $\ell=0,\ldots,L-1$.
We refer to $x_0,\ldots,x_{K-1}\in\R^L$ as sub-signals.
 Using this notation,  the  model~\eqref{eq:model} can be written as
\begin{equation} \label{eq:heter_mra}
y =  R_{\ell} x_{k} + \varepsilon,
\end{equation}
where $k$ is drawn uniformly at random from $\{0,\ldots,K-1\}$,   $R_{\ell}$ is a circular shift on an $L$-point grid $[0,1,\ldots,L-1]$, and $\ell$ is distributed uniformly.  
 The model~\eqref{eq:model} is thus equivalent to the 
{hMRA} model, recently studied in~\cite{perry2017sample,bandeira2017estimation,boumal2018heterogeneous,ma2018heterogeneous}, applied to the sub-signals $x_0,\ldots,x_{K-1}$.

\begin{figure}
	\begin{subfigure}[ht]{0.48\columnwidth}
		\centering
		\includegraphics[width=.9\columnwidth]{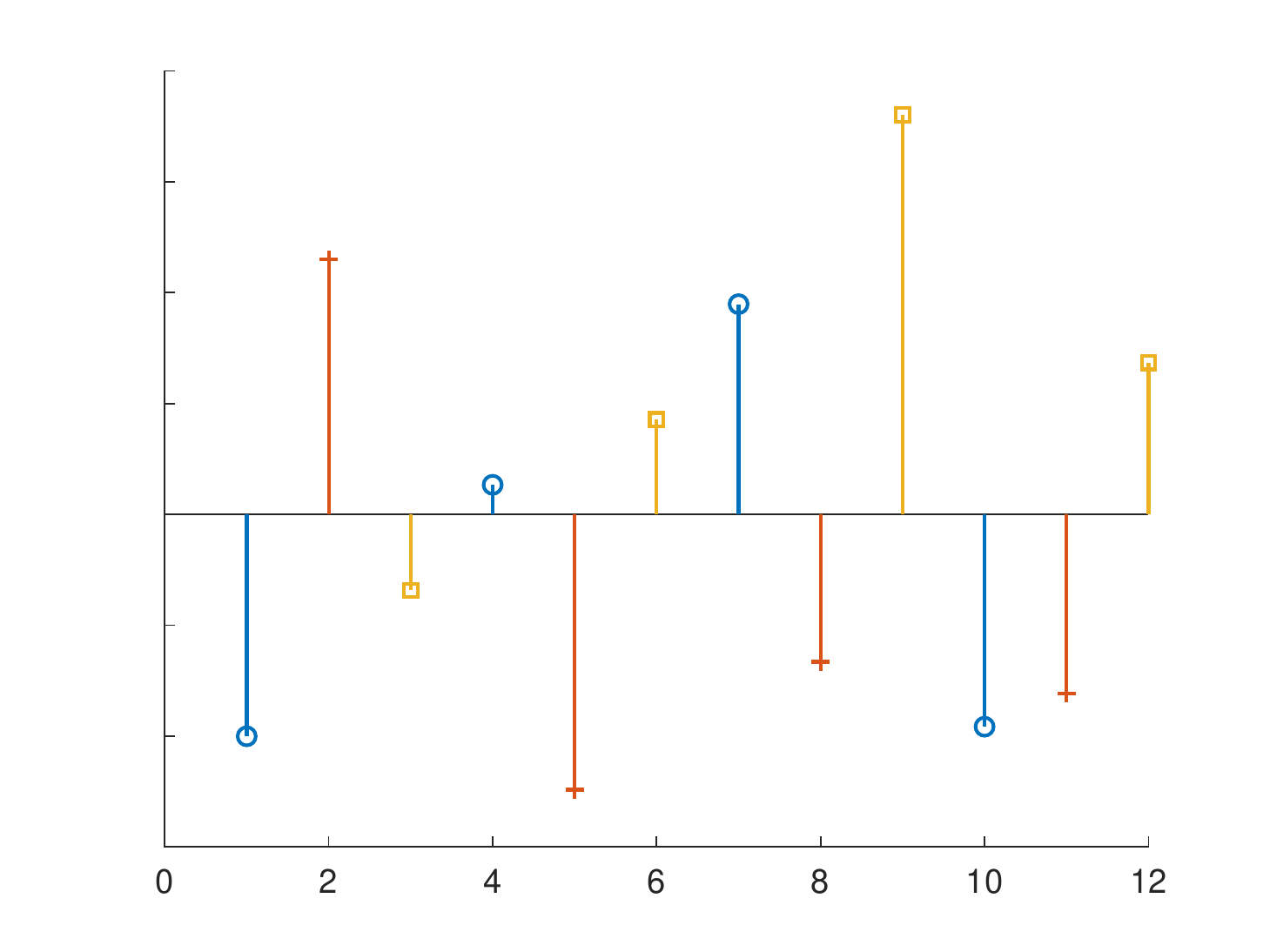}
		\caption{A signal consisting of 3 sub-signals}
	\end{subfigure}
	\hfill
	\begin{subfigure}[ht]{0.48\columnwidth}
		\centering
		\includegraphics[width=.9\columnwidth]{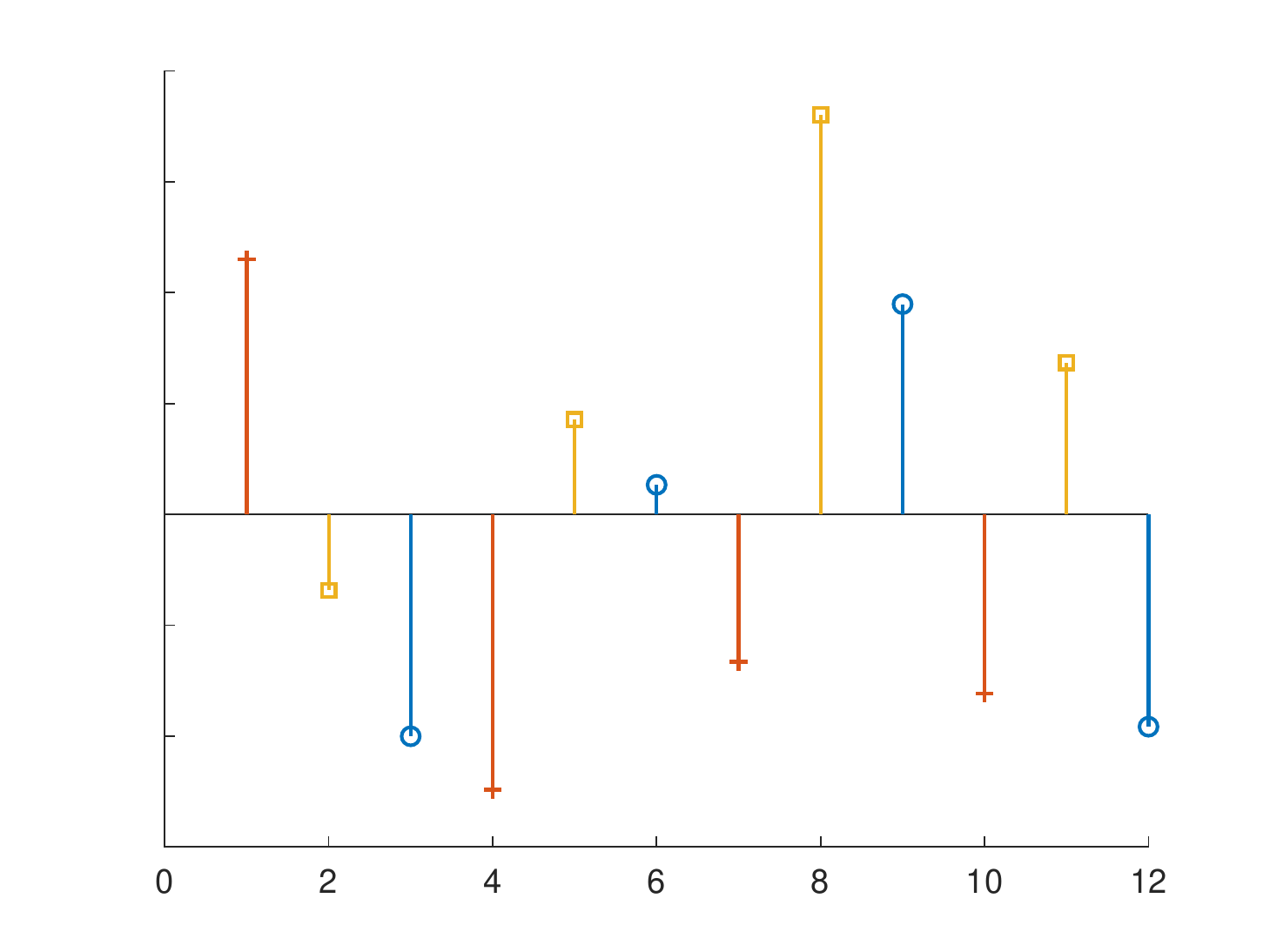}
		\caption{ A permutation of the sub-signals}
	\end{subfigure}

	\begin{subfigure}[ht]{0.48\columnwidth}
	\centering
	\includegraphics[width=.9\columnwidth]{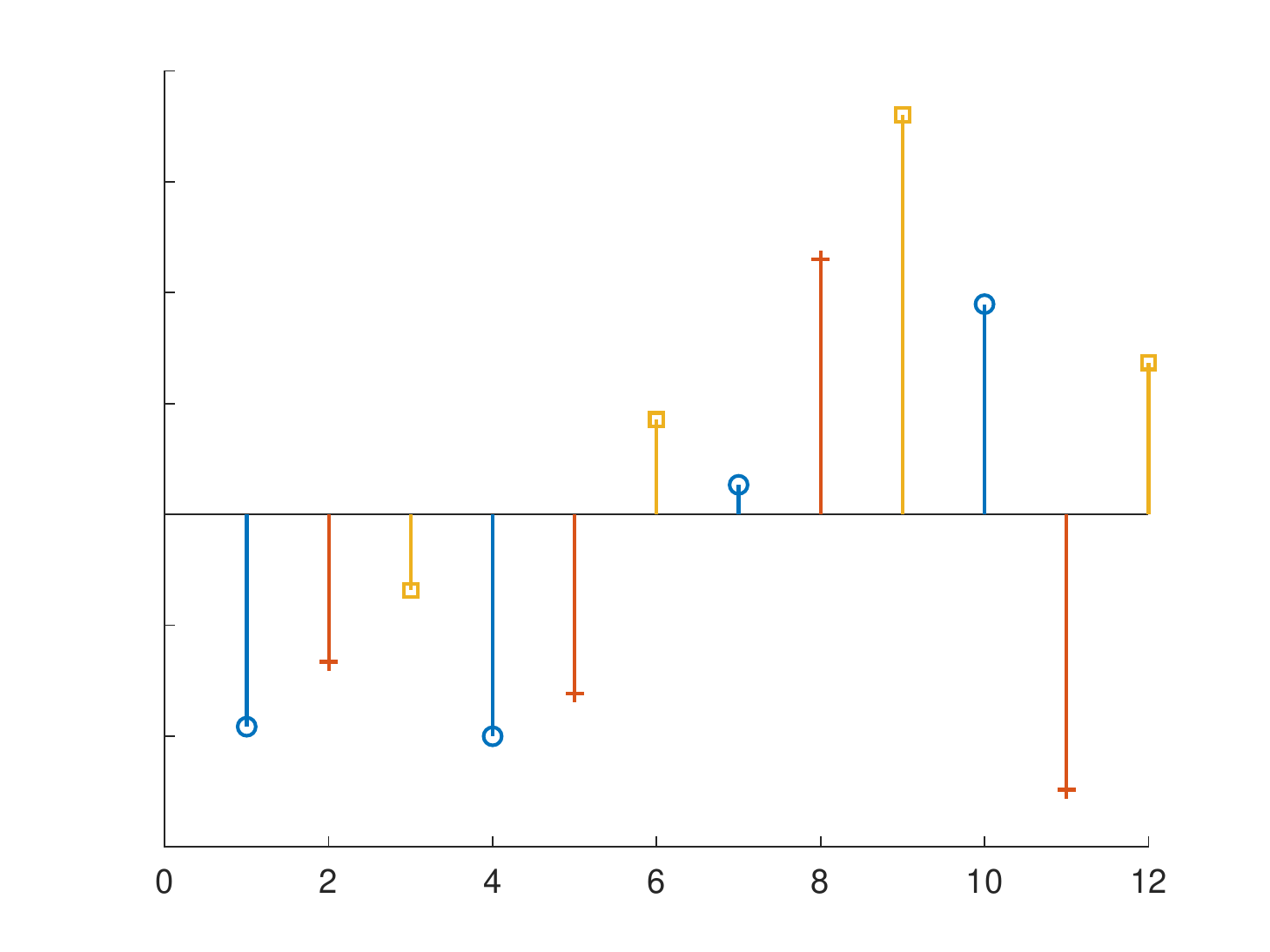}
	\caption{ Circular shifts of the sub-signals}
\end{subfigure}
\hfill
\begin{subfigure}[ht]{0.48\columnwidth}
	\centering
	\includegraphics[width=.9\columnwidth]{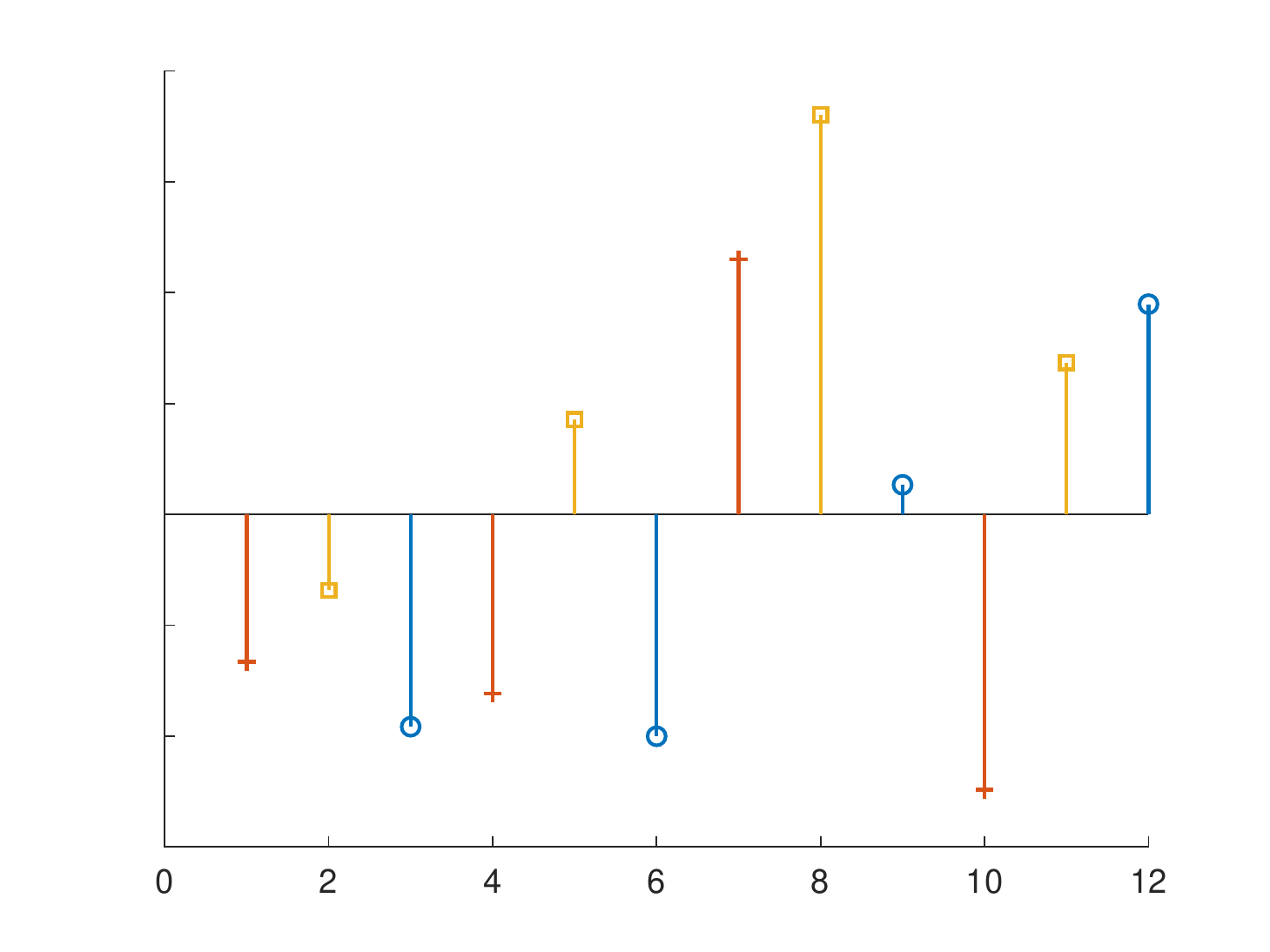}
	\caption{ Permutation + circular shifts of the sub-signals}
\end{subfigure}
\caption{\label{fig:group} An illustration of the orbit $G_{\Pi,L}x$; all four signals have the same likelihood function.  
	(a) A signal of length $M=12$ consists of $K=3$ sub-signals (drawn in different colors). (b) Permuting the sub-signal $(x_0,x_1,x_2)\mapsto (x_1,x_2,x_0)$. (c) Shifting the sub-signals {$(x_0,x_1,x_2)\mapsto (R_{-1}x_0,R_{-2}x_1,x_2)$}. (d) Permuting and shifting the sub-signals {$(x_0,x_1,x_2)\mapsto (R_{-2}x_1,x_2,R_{-1}x_0)$}.}
\end{figure}

Observations from the hMRA model~\eqref{eq:heter_mra} enable the recovery of 
$x_0,\ldots, x_{K-1}$ up to a circular shift of each sub-signal and a permutation across  signals. 
This can be seen by considering the 
marginalized likelihood of a single observation $y$: 
\begin{equation} \label{eq:likelihood}
p(y|x)=\frac{1}{M}\sum_{\ell=0}^{L-1}\sum_{k=0}^{K-1}
p(y|x,\ell,k) = \frac{1}{(2\pi\sigma^2)^{L/2}M}\sum_{\ell=0}^{L-1}\sum_{k=0}^{K-1}e^{-\frac{1}{2\sigma^2}\|y-R_{\ell}x_k\|_2^2}.
\end{equation}
Plainly, $p(y|x)$ is invariant under any permutation $\pi$  (overall $K!$  permutations)  $$(x_0,\ldots,x_{K-1}) \mapsto (x_{\pi(0)},\ldots,x_{\pi(K-1)}),$$ or circular shifts $\ell_0,\ldots,\ell_{K-1}$ (overall $L^K$  permutations) $$(x_0,\ldots,x_{K-1}) \mapsto (R_{\ell_0}x_0,\ldots,R_{\ell_{K-1}}x_{K-1}).$$ This set of  permutations, denoted by $G_{\Pi,L}$, includes  $K!L^{K}$ elements and constitutes a subgroup of the permutation group of $M$ elements. 
The orbit of $x$ under $G_{\Pi,L}$ is illustrated in Figure~\ref{fig:group}.

Importantly, previous works on hMRA aimed to retrieve  the orbit $G_{\Pi,L} x$.
In this work we further  wish to recover the high-resolution signal $x$ by ordering the sub-signal\rev{s} $x_0,\ldots,x_{K-1}$ properly, which is impossible based solely on the likelihood.  
Thus, we must impose some additional constraints on the signal. In particular, we show in Section~\ref{sec:merge_signals} that for almost any Gaussian prior there is a single element of  
 $G_{\Pi,L} x$ that achieves the maximum of the posterior distribution. 
 
\subsection{Identifying the orbit $G_{\Pi,L}x$} \label{sec:recover_orbit}

\subsubsection{The noiseless case}
\label{sec:noiseless}

In the absence of noise, if we have observed each one of the $K$ sub-signals $x_0,\ldots,x_{K-1}$, we can determine the orbit $G_{\Pi,L}x$ immediately by considering all of their circular shifts and permutations. Therefore, the only question is how many observations from~\eqref{eq:model} are required to see each sub-signal~$x_k$ at least once; this problem is known in the combinatorics literature as  the \emph{coupon collector's problem} \rev{(see, for instance, \cite{feller1968introduction})}.   In expectation, it is known that $KH_K$ observations are required to see all $K$ signals, where $H_K$ is the harmonic sum
\begin{equation*}
H_K = \frac{1}{1} + \frac{1}{2} + \cdots + \frac{1}{K}= \log K + \gamma + \epsilon_K,
\end{equation*}
where $\gamma\approx 0.57721 $ is the  Euler-Mascheroni constant and $\epsilon_K \sim 1/\rev{(2K)}$ for large $K$. If $K$ is large enough, the harmonic sum can be bounded by $H_K\leq C \log K$ for some small constant $C$. For example, $H_K\leq 2 \log K$ for any $K\geq 3$. Thus, we say that in expectation \rev{$N\approx K\log K=M/L\log(M/L)$} observations suffice to characterize the orbit $G_{\Pi,L}x$  from \rev{noiseless} observations.
Yet, even in the absence of noise, \rev{Theorem~\ref{thm:nonunique} suggests that} finding $x$ itself from $G_{\Pi,L}x$ is a non-trivial task, \rev{requiring additional assumptions; we address this in Section~\ref{sec:merge_signals}.}

\subsubsection{Auto-correlation analysis}
\label{sec:auto-correlations}
In the low SNR regime, we propose to estimate the orbit $G_{\Pi,L} x$ using the first three auto-correlations of the observations, or, equivalently, their Fourier counterparts: the mean, power spectrum, and bispectrum.  
Assuming $N/\sigma^6\to\infty$ and considering~\eqref{eq:heter_mra}, the  invariants of the data converge to the average of the invariants of the $K$ sub-signals, up to bias terms:
\begin{align} \label{eq:mix_invariants}
\frac{1}{N} \sum_{i=1}^{N}\hat{M}_1\left(y_i\right)& \overset{N\to\infty}{\to}  \frac{1}{K}\sum_{k=0}^{K-1}\hat{M}_1(x_k), \nonumber\\
\frac{1}{N} \sum_{i=1}^{N}  \hat{M}_2\left(y_i\right)[\ell] &\overset{N\to\infty}{\to} \frac{1}{K}\sum_{k=0}^{K-1}  \hat{M}_2(x_i)[\ell] + B_2, \\
\frac{1}{N} \sum_{i=1}^{N}\hat{M}_3\left(y_i\right)[\ell_1,\ell_2] &\overset{N\to\infty}{\to} \frac{1}{K}\sum_{k=0}^{K-1} \hat{M}_3(x_i) [\ell_1,\ell_2] + B_3[\ell_1,\ell_2], \nonumber
\end{align}
where $B_2 = \sigma^2L\mathbf{1}$, $B_3[\ell_1,\ell_2]=\bar{x} \sigma^2 L^2 D[\ell_1,\ell_2],$
$\mathbf{1}\in\mathbb{R}^L$ is a vector of ones, $\bar{x}$ is the average of $x$ 
$D\in\mathbb{R}^{L\times L}$ is a zero matrix except $D[0,0]=3$ and $D[i,0]=D[0,i]=D[i,i]=1$ for $i=1,\ldots,L-1$,
and $\ell,\ell_1,\ell_2=0,\ldots,L-1$.  We note that if $\sigma^2$ is known, one can easily remove the bias factors from the second- and third-order invariants.
As $N\to\infty$, the left-hand side equals the  right-hand side almost surely. 

The reason we require  $N/\sigma^6\to\infty$ is that  the third-order auto-correlation requires taking triple products of three noise terms (thus tripling the effective noise level), and thus for large $\sigma$ the number of observations needs to scale at least as $\sigma^6$ to keep the variance of the estimator under control; more precisely, only when $N/\sigma^6\to\infty$ one can estimate the invariants accurately.   
Therefore, if $N/\sigma^6\to\infty$, one can estimate the first three auto-correlations at any SNR levels. If $\sigma$ is fixed while $N\to\infty$, then one can estimate all auto-correlations at any SNR level. 
If $\sigma\to\infty$ and $N$ does not scale with $\sigma^6$, \rev{namely, $N/\sigma^6<C$ for some finite constant $C$, }then the third-order auto-correlation cannot be \rev{consistently} estimated from the observations. 

\subsubsection{Identifiability conditions for the orbit $G_{\Pi,L}x$ from the auto-correlations} \label{sec:identification_Gx}
As discussed in Section~\ref{sec:background},  it is well-known that the first three invariants determine a single \rev{generic} signal uniquely~\cite{sadler1992shift,kakarala2012bispectrum,bendory2017bispectrum}.  
Using tools from invariant theory and algebraic geometry, this result was recently extended to demixing of $K\geq 1$ invariants as in~\eqref{eq:mix_invariants}~\cite{bandeira2017estimation}.
The framework of~\cite{bandeira2017estimation} is based on  \rev{checking computationally the rank of the Hessian matrix of the map between a generic\footnote{{By the notion of generic signals, we mean that all  signals that are not recoverable in the regime defined by equation~\eqref{eq:Pl} satisfy a certain polynomial equation, and thus are} of measure zero.} signal and the invariants. If the rank is sufficiently high ({depending} on the algebraic structure of the problem), we say that the Hessian test is passed, implying that the orbit $G_{\Pi,L}x$ can be identified for generic~$x$. The Hessian test is executed on pairs of parameters $(K,L)$; if $(K,L)$ passes the test, it implies identifiability for all pairs $(K',L), K'\leq K$. In particular, it was
verified\footnote{We extended the range of parameters examined by the authors of~\cite{bandeira2017estimation}. We thank Dr. Joseph Kileel for his assistance to execute this computational verification.}  for all $L\leq 192$ that a set of   
generic signals is determined uniquely from~\eqref{eq:mix_invariants}, up to the symmetries that form the group $G_{\Pi,L}$, as long as 
\begin{equation} \label{eq:Pl}
K<\mathcal{P}(L) := \frac{L+3+\left\lfloor L/2\right\rfloor +  \left\lceil (L-1)(L-2)/6\right\rceil}{L+1}\approx L/6.
\end{equation}
This immediately implies that   the  number of required samples is bounded by
\begin{equation}
	K = \frac{M}{L} \lesssim L/6 \qquad \Rightarrow \qquad  L \gtrsim \sqrt{6M}.
\end{equation}
This bound is conjectured  to hold true for any pair  $(L,K)$ that satisfies~\eqref{eq:Pl}; see~\cite{bandeira2017estimation,boumal2018heterogeneous}.  We note that the bound of~\eqref{eq:Pl} 
is tight in the sense that it agrees with a simple upper bound based on parameter counting: on the one hand, the bispectrum of a generic signal contains $\mathcal{P}(L)\cdot L\approx L^2/6$ distinct entries (out of $L^2$ entries in total), and on the other hand   $K$ signals consists of $KL$ parameters.}



The following proposition and conjecture summarize the result:
\begin{prop} \label{prop1}
	Suppose that we acquire an average of the mean, power spectrum, and bispectrum of  $K$ signals as in the right-hand side~\eqref{eq:mix_invariants}. 
	 Then,  \rev{for  $L\leq 192$ and any $K$}  that satisfies~\eqref{eq:Pl}, one can identify the orbit $G_{\Pi,L} x$ for generic $x$. 
\end{prop}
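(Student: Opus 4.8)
The plan is to recast Proposition~\ref{prop1} as a statement about the generic fibers of a single polynomial map, and then reduce orbit separation to a finite rank computation. First I would complexify and define the invariant map $\Phi\colon (\C^L)^K \to V$ that sends $(x_0,\dots,x_{K-1})$ to the triple of averaged invariants appearing on the right-hand side of~\eqref{eq:mix_invariants}, where $V$ collects the mean, the $L$ power-spectrum entries, and the $L^2$ bispectrum entries. Because each $\hat{M}_q$ is unchanged by a circular shift of its argument and the average is symmetric in $x_0,\dots,x_{K-1}$, the map $\Phi$ is constant on $G_{\Pi,L}$-orbits and therefore factors through the finite quotient $(\C^L)^K/G_{\Pi,L}$. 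With this setup, identifying the orbit for generic $x$ is exactly the claim that $\Phi$ is generically injective on the quotient, i.e.\ that for $x$ outside a proper Zariski-closed (measure-zero) set one has $\Phi^{-1}(\Phi(x)) = G_{\Pi,L}\,x$.

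Next I would invoke the invariant-theoretic reduction underlying the framework of~\cite{bandeira2017estimation}. Since $G_{\Pi,L}$ is finite, the quotient has dimension $KL$, so a necessary condition for generic injectivity is that the coordinates of $\Phi$ span a $KL$-dimensional space of invariants; this is certified by the differential $d\Phi$ attaining rank $KL$ at one, and hence at a generic, point, the full-rank locus being Zariski-open so that a single witness propagates to all generic signals by lower semicontinuity of the rank. Counting the distinct bispectrum entries, $\mathcal{P}(L)\cdot L \approx L^2/6$, against the $KL$ free parameters yields precisely the necessary bound $K < \mathcal{P}(L)$ of~\eqref{eq:Pl}; the substance of the proposition is that this bound is also sufficient. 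I would establish sufficiency by evaluating the relevant Jacobian/Hessian at a random (hence generic) point for each admissible pair $(K,L)$ with $L \le 192$ and $K < \mathcal{P}(L)$---this is the computational Hessian test---and then use the monotonicity that a pass for $(K,L)$ forces a pass for every $(K',L)$ with $K' \le K$, since deleting sub-signals only removes constraints.

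The delicate point, and the step I expect to be the main obstacle, is that the difficulty here is genuinely a \emph{demixing} problem rather than a single-signal inversion: we do not observe each $\hat{M}_q(x_k)$ but only their average over the $K$ sub-signals, so the classical constructive recovery of a signal from its own power spectrum and bispectrum (up to shift) does not apply directly. Correspondingly, full rank of $d\Phi$ only guarantees that $\C(\Phi)$ is a finite-index subfield of the field of $G_{\Pi,L}$-invariants, i.e.\ that the generic fiber is \emph{finite}; it does not by itself exclude the fiber consisting of several distinct $G_{\Pi,L}$-orbits. Closing this gap between ``finitely many orbits'' and ``a single orbit'' is where the real work lies, and one must either compute the degree of $\Phi$ onto its image and show it equals the generic orbit size $K!\,L^{K}$ (via a monodromy argument or a second-order non-degeneracy certificate isolating $G_{\Pi,L}x$ as the unique component through a generic $x$), or exploit the explicit algebraic structure of the averaged invariants to unmix and reconstruct the multiset $\{x_k\}$ up to per-signal shift and global permutation. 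Finally, because such a certificate is obtained by a finite computation, it can only be executed for $L \le 192$; this is the sole reason the result is stated in that range, with the extension to larger $L$ left as a conjecture.
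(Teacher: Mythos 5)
Your proposal is correct and follows essentially the same route as the paper: like the authors, you reduce Proposition~\ref{prop1} to the invariant-theoretic framework of~\cite{bandeira2017estimation}, certify generic orbit identifiability by a rank computation at a random (hence generic) point for each admissible pair $(K,L)$ with $L\leq 192$, and invoke monotonicity in $K$ --- which is exactly the paper's proof, namely a citation to that framework plus the reported computational verification of the Hessian test. Your explicit treatment of the gap between ``finitely many orbits'' (guaranteed by full Jacobian rank) and ``a single $G_{\Pi,L}$-orbit'' is precisely what the paper's Hessian test is designed to certify, so your account is a faithful, if anything more careful, rendering of the same argument.
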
	
\begin{conj} \label{conj1}
	Suppose that we acquire an average of the mean, power spectrum, and bispectrum of  $K$ signals as in the right-hand side of~\eqref{eq:mix_invariants}. Then,  for any pair $(K,L)$ that satisfies~\eqref{eq:Pl},
	one can identify the orbit $G_{\Pi,L} x$ for generic $x$.
\end{conj}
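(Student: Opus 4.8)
The plan is to recast Conjecture~\ref{conj1} as a single statement about one polynomial criterion and then to certify that criterion for the entire infinite family of admissible pairs $(K,L)$, rather than the finite range $L\le 192$ covered by the numerical Hessian test behind Proposition~\ref{prop1}. Let $\Phi$ denote the polynomial map sending $(x_0,\ldots,x_{K-1})\in\R^{KL}$ to the averaged mean, power spectrum, and bispectrum on the right-hand side of~\eqref{eq:mix_invariants}. The framework of~\cite{bandeira2017estimation} certifies, for each tested $(K,L)$, that these degree-$\le 3$ invariants separate generic $G_{\Pi,L}$-orbits, by checking at a random point the rank of the matrix underlying the Hessian test. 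Passing the test at a generic point is exactly the statement that a certain maximal minor (a discriminant-type polynomial in the coordinates $\hat{x}_k[\omega]$) is not identically zero, and this nonvanishing implies generic separation. The parameter count recalled after~\eqref{eq:Pl}---about $\mathcal{P}(L)\cdot L\approx L^2/6$ independent bispectral entries against $KL$ unknowns---shows that $K<\mathcal{P}(L)$ is precisely the threshold at which full rank is even possible, so the content of the conjecture is that this necessary count is also sufficient for \emph{every} $L$.

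First I would reduce the infinite family to a one-parameter problem. As noted above Proposition~\ref{prop1}, passing the test for $(K,L)$ forces it for all $(K',L)$ with $K'\le K$; hence it suffices to establish the nonvanishing of the discriminant for the single largest admissible value $K=\lceil\mathcal{P}(L)\rceil-1$ at each $L$. This leaves one family indexed by $L$, which I would attack by induction on $L$. The inductive step should exploit that incrementing $L$ enlarges the bispectrum by a quadratically growing block of new entries $\hat{M}_3[k_1,k_2]$ involving the newly available high frequencies, while introducing only linearly many new unknowns $\hat{x}_k[\cdot]$; the gap between these two growth rates is the structural reason the count closes, and the aim of the induction is to show that the new bispectral rows supply enough independent gradient directions to cover the new columns while preserving the rank already built on the lower frequencies.

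To make the inductive step concrete I would work entirely in the Fourier domain, where each averaged bispectrum entry is the cubic form $\tfrac{1}{K}\sum_k \hat{x}_k[k_1]\hat{x}_k[k_2]\hat{x}_k[-k_1-k_2]$ from~\eqref{sec:bispectrum}, so that the entries of the test matrix are quadratic in the coordinates and inherit the cyclic-convolution structure of the index set $\{(k_1,k_2)\}$. Since the relevant minor is then a fixed polynomial in the $\hat{x}_k[\omega]$, nonvanishing can be certified by a Schwartz--Zippel argument: it suffices to exhibit, for each $L$, a single explicit evaluation point at which the minor is nonzero. The productive route is to choose a low-complexity, highly structured configuration---for instance sub-signals whose Fourier coefficients form geometric progressions, or spectrally sparse sub-signals supported on disjoint frequency bands---for which the minor factors into terms that are manifestly nonzero and whose form is uniform in $L$. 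Equivalently, one may assign frequency-dependent weights to the variables and compute the initial form of the minor in the induced grading; if that initial form is nonzero, so is the minor. The heart of the proof is to produce such a point (or such a grading) as an explicit function of $L$ and $K=\lceil\mathcal{P}(L)\rceil-1$.

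I expect the main obstacle to be exactly this last step: producing a single symbolic certificate valid for all $L$, as opposed to the per-instance numerical check that validates Proposition~\ref{prop1}. The difficulty is twofold. The map is genuinely cubic, and the demixing of $K$ superimposed bispectra is combinatorial---the same monomial $\hat{x}_k[k_1]\hat{x}_k[k_2]\hat{x}_k[-k_1-k_2]$ can be contributed by different sub-signals---so the minor does not obviously factor, and its nonvanishing near the boundary $K=\mathcal{P}(L)$, where the system is least over-determined, is precisely where cancellations are most likely. Any induction must therefore track not merely that the rank is full but a quantitative, $L$-uniform lower bound on the controlling determinant (or on its initial form) across the whole family; absent such a bound, the argument collapses back into the case-by-case verification it is meant to replace.
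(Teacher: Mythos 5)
This statement is a \emph{conjecture} in the paper: the authors do not prove it, and the only evidence they offer is (i) the computational Hessian test of~\cite{bandeira2017estimation}, executed instance by instance up to $L\leq 192$ (this is Proposition~\ref{prop1}), and (ii) the parameter-counting argument showing the bound~\eqref{eq:Pl} is tight. Your proposal correctly reconstructs that framework---the reduction to nonvanishing of a discriminant-type polynomial, the reduction to the largest admissible $K$ per $L$, the Fourier-domain formulation of the demixed bispectrum in~\eqref{eq:mix_invariants}---but it is not a proof, and you say so yourself: the decisive step, an $L$-uniform symbolic certificate (an explicit evaluation point or weighting for which the maximal minor is provably nonzero for every $L$), is exactly what is missing. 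Schwartz--Zippel only converts ``the minor is not identically zero'' into ``a random point witnesses this with high probability'' for a \emph{fixed} $(K,L)$; it gives no leverage on the quantifier over all $L$, which is the entire content of Conjecture~\ref{conj1}. As it stands, your argument reduces the conjecture to an open problem of the same difficulty, which is also where the paper and~\cite{bandeira2017estimation} leave it.

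Two of your concrete devices are also structurally doubtful. First, the induction on $L$: the invariants live on the cyclic group $\mathbb{Z}_L$, and $\mathbb{Z}_L$ does not embed in $\mathbb{Z}_{L+1}$---frequencies and the convolution constraint $k_1+k_2+k_3\equiv 0 \pmod L$ change wholesale, so there is no well-defined sense in which incrementing $L$ ``adds a block of new bispectral rows while preserving the rank already built''; moreover $K=\lceil\mathcal{P}(L)\rceil-1$ and $M=KL$ both jump with $L$, so the inductive hypothesis and conclusion concern maps between spaces of unrelated dimensions. Second, the proposed witness points are the worst candidates for genericity: spectrally sparse sub-signals (vanishing Fourier coefficients) are precisely the measure-zero locus excluded from bispectrum-identifiability statements, and highly symmetric configurations (e.g., sub-signals with proportional spectra) visibly collapse the rank of the Jacobian of the demixing map, since the averaged cubic form is symmetric under permuting sub-signals. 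So even for a single $L$, the structured point you evaluate at may lie in the bad locus while the minor is nonzero generically---full rank at your point is sufficient but its failure proves nothing, which undercuts the certificate strategy unless the point is chosen with much more care than the proposal specifies.
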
	

\subsubsection{A note on computational considerations} \label{sec:computational_considerations}
It is important to note that \rev{Proposition~\ref{prop1}} does not claim that the bound~\eqref{eq:Pl} can be achieved using a computationally efficient (e.g., polynomial time) algorithm.  
In the context of the hMRA model, numerical evidence suggests that for i.i.d.\ standard Gaussian signals, one can estimate the 
orbit  $G_{\Pi,L}x$ from a mix of bispectra using  non-convex least squares only in the regime $K\leq \sqrt{L}$---substantially below the identifiability regime $K\lesssim L/6$~\cite{boumal2018heterogeneous}.
Recently, it was proven that
i.i.d.\ standard Gaussian signals can be disentangled,  with high probability,   using a sum-of-squares algorithm  as long as $K\leq \sqrt{L}/\text{polylog}(L)$~\cite{weinthesis}.
In~\cite{bandeira2017estimation,boumal2018heterogeneous}, it was conjectured that the $\sqrt{L}$ bound reflects a fundamental statistical-computational gap\rev{---namely, while {it is statistically possible} to recover approximately $L/6$ signals, any efficient  (polynomial-time) algorithm can estimate at most $\sqrt{L}$ signals.}

If indeed one can recover only up to $\sqrt{L}$ signals efficiently from~\eqref{eq:mix_invariants}, it  implies that {the orbit $G_{\Pi,L}x$  can be estimated efficiently in the regime \rev{$K=\frac{M}{L}\leq\sqrt{L} \Rightarrow M \leq L^{3/2}$} for i.i.d.\ Gaussian entries. 
{Having said that, in contrast to the  hMRA model, the goal of the super-resolution problem is not to recover the orbit $G_{\Pi,L}x$, but rather the signal $x$ itself; the latter task seems to be a  significantly  more challenging computational problem. 
\rev{In addition, the main interest of this work is in smooth signals (e.g., signals with decaying power spectrum). 
For such signals, the achievable performance of hMRA deteriorates~\cite{bandeira2017optimal}, and even recovering  $\sqrt{L}$ signals seem to be unreachable.} 	
Indeed, numerical experiments in Section~\ref{sec:numerics} suggest that recovery is not attainable even in the regime $M \approx L^{3/2}$---at least not with the EM algorithm.}

	
\subsection{Identifying a unique high-resolution signal from the orbit $G_{\Pi,L} x$}\label{sec:merge_signals}
Until now, we have shown that one can identify the orbit $G_{\Pi,L}x$ for generic $x$ if $L\gtrsim \sqrt{6M}$\rev{, $L<192,$} and  the first three auto-correlations can be estimated from the observations.  Next, we wish to show how to determine a single signal out of $G_{\Pi,L}x$. 

Recall that the posterior distribution  $p(x|y_1,\ldots,y_N)$ is proportional to the likelihood function $p(y_1,\ldots,y_N|x)$ times a prior on the signal $p(x)$. According to~\eqref{eq:likelihood}, the likelihood is constant over the orbit $G_{\Pi,L}x$. \rev{Consequently, we choose the signal in the orbit that best fits the prior.} Importantly, this part is independent of the observations.

Many priors can be used. In this section, we focus on  Gaussian signals with zero mean and covariance $\Sigma$, that is, $p(x)=\frac{1}{\sqrt{(2\pi)^{M}|\Sigma|}}e^{-\frac{1}{2}x^T\Sigma^{-1}x}$. \rev{Priors of this form are ubiquitous in signal processing, and have been considered in previous works, such as \cite{robinson2006statistical, robinson2009optimal, woods2005stochastic}.} In particular, we wish to show that among all signals in $G_{\Pi,L}x$, there is a single signal that maximizes $p(x)$, or, equivalently, minimizes $x^T\Sigma^{-1} x$ for a positive-definite matrix, $\Sigma^{-1}$. The next lemma shows that permuting a signal usually changes this quadratic form. 
To this end, we define the  $\Sigma^{-1}$ norm  by $\norm{z}_{\Sigma^{-1}}^2 := z^T \Sigma^{-1} z$ for a positive-definite matrix $\Sigma^{-1}$. 
\begin{lemma} \label{lemma:unique_perm}
Let \rev{$\Omega \subset \mathbb{R}^n$ and denote by} $R_1$ and $R_2$ two permutation matrices and their ratio by $R=R_2 R_1^T$. Then the set 
\[  \mathcal{Z}_{R_1,R_2} =  \left\lbrace z \in \Omega \mid \norm{R_1z}_{\Sigma^{-1}}^2 = \norm{R_2 z}_{\Sigma^{-1}}^2 \right\rbrace  ,\] 
is a subset of $\Omega$ of measure zero if and only if ${\Sigma^{-1}}$ and $R$ do not commute.
\end{lemma}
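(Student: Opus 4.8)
The plan is to reduce the defining condition of $\mathcal{Z}_{R_1,R_2}$ to the vanishing of a single quadratic form, and then to translate the algebraic condition ``$A = 0$'' into the commutation statement. First I would expand the two $\Sigma^{-1}$-norms: since $\norm{R_i z}_{\Sigma^{-1}}^2 = z^T R_i^T \Sigma^{-1} R_i z$, the condition $\norm{R_1 z}_{\Sigma^{-1}}^2 = \norm{R_2 z}_{\Sigma^{-1}}^2$ is equivalent to $z^T A z = 0$, where
\[
A := R_1^T \Sigma^{-1} R_1 - R_2^T \Sigma^{-1} R_2
\]
is a symmetric matrix. Thus $\mathcal{Z}_{R_1,R_2} = \{ z \in \Omega : z^T A z = 0 \}$ is exactly the zero set, restricted to $\Omega$, of the quadratic form $q(z) = z^T A z$.

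Next I would invoke the standard fact that a nonzero polynomial on $\mathbb{R}^n$ has a zero set of Lebesgue measure zero. Since $q$ is a polynomial of degree two that vanishes identically if and only if its symmetric coefficient matrix $A$ is zero, it follows that $\mathcal{Z}_{R_1,R_2}$ is a measure-zero subset precisely when $A \neq 0$; conversely, if $A = 0$ then $q \equiv 0$ and $\mathcal{Z}_{R_1,R_2} = \Omega$, which is not measure zero provided $\Omega$ carries positive Lebesgue measure (e.g.\ $\Omega = \mathbb{R}^n$ or any open set). Hence the whole lemma reduces to proving the equivalence $A = 0 \iff \Sigma^{-1} R = R \Sigma^{-1}$.

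The core step, and the one requiring the most care, is this algebraic equivalence, which I would establish by left/right multiplication with the permutation matrices together with the identities $R_1 R_1^T = R_2 R_2^T = I$. Starting from $A = 0$, that is $R_1^T \Sigma^{-1} R_1 = R_2^T \Sigma^{-1} R_2$, left-multiplying by $R_2$ gives $R_2 R_1^T \Sigma^{-1} R_1 = \Sigma^{-1} R_2$, and then right-multiplying by $R_1^T$ yields $R \Sigma^{-1} = \Sigma^{-1} R$, recalling that $R = R_2 R_1^T$. Because every matrix used here is orthogonal, each operation is invertible, so the implications reverse and the equivalence holds in both directions. Combining the three paragraphs, $\mathcal{Z}_{R_1,R_2}$ has measure zero if and only if $A \neq 0$, if and only if $\Sigma^{-1}$ and $R$ fail to commute. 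The main obstacle here is bookkeeping rather than depth: one must keep the sidedness of the permutation-matrix multiplications straight and verify that they are genuinely reversible (which relies only on orthogonality of permutation matrices), and one should flag the positive-measure assumption on $\Omega$ needed for the ``only if'' direction.
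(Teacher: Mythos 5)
Your proposal is correct and follows essentially the same route as the paper: the same reduction to the quadratic form $z^T A z$ with $A = R_1^T\Sigma^{-1}R_1 - R_2^T\Sigma^{-1}R_2$, and the same equivalence between $A=0$ and commutation of $R=R_2R_1^T$ with $\Sigma^{-1}$ (which you verify by explicit multiplication, where the paper merely observes it). The only divergence is the measure-zero step: you cite the standard fact that the zero set of a nonzero polynomial has Lebesgue measure zero, whereas the paper proves this instance by hand --- orthogonally diagonalizing $A$ and using Fubini to show each slice in the leading eigencoordinate meets the zero set in at most two points; your remark that the ``only if'' direction needs $\Omega$ to have positive measure is a legitimate point the paper leaves implicit.
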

\begin{proof}
Let $A := R_1^T {\Sigma^{-1}} R_1 - R_2^T {\Sigma^{-1}} R_2$. Observe that $A = 0$ if and only if $R$ and {$\Sigma^{-1}$} commute. The condition $\norm{R_1z}_{\Sigma^{-1}}^2 = \norm{R_2 z}_{\Sigma^{-1}}^2$ is equivalent to $ z^T A z = 0$. Thus, if $A=0$ then  $\mathcal{Z}_{R_1,R_2}=\Omega$. Otherwise, $A\neq0$ \rev{and symmetric, i.e., there exists a basis $Q$ of orthogonal eigenvectors such that $(Qy)^T A Qy = \sum_{i=1}^n \lambda_i y^2_i$, with $\lambda_1 \neq 0$. Therefore, the condition $(Qy)^T A Qy = 0$ means that  $y_1^2 = \sum_{i=2}^n \frac{\lambda_i}{-\lambda_1} y^2_i$. Denote the indicator functions $\chi_Z$ and $\chi_Y$ for the two sets $\left\lbrace z \in \Omega \mid z^TAz = 0 \right\rbrace $ and $\left\lbrace y \in \mathbb{R}^{n} \mid y^T (Q^TAQ) y = 0\right\rbrace$, respectively. \rev{Since orthogonal transformations preserve integrals} and $\Omega \subset \mathbb{R}^n$, we have
\[ \abs{\mathcal{Z}_{R_1,R_2} } = \int_\Omega \chi_{Z}(z) dz  \le \int_{\mathbb{R}^n} \chi_{Y} (y) dy = \int_{\mathbb{R}^{n-1}} \left( \int_{\mathbb{R}} \chi_{Y} (y_1,y_2,\ldots,y_n ) dy_1 \right) dy_2 \cdots dy_n .  \]
Recall that $(Qy)^T A Qy = 0$ implies that $y_1 = \pm \left( \sum_{i=2}^n \frac{\lambda_i}{-\lambda_1} y^2_i\right)^{\frac{1}{2}}$. Thus, for any fixed $y_2,\ldots,y_n$, the indicator function $\chi_{Y}$ \rev{is nonzero on only two points, implying}
$\int_{\mathbb{R}} \chi_{Y} (y_1,y_2,\ldots,y_n ) dy_1  = 0$. Consequently, $\abs{\mathcal{Z}_{R_1,R_2} } = 0$.}

\end{proof}

Any group of permutations over \rev{a finite set is finite}, and thus there are finitely many pairs $R_1$ and $R_2$ of permutation matrices. Consequently, the set of signals for which  $\norm{R_1z}_{\Sigma^{-1}}^2 = \norm{R_2 z}_{\Sigma^{-1}}^2$  for some pairs of permutations is also of measure zero.
\begin{cor} \label{cor:prior}
	Assume the conditions of Lemma~\ref{lemma:unique_perm} \rev{are met, that is, ${\Sigma^{-1}}$ and {$R_2 R_1^T$} do not commute for all pairs $R_1\neq R_2$ in the permutation group}. Then \rev{for almost every $x$} in $\Omega$ the minimum of the quadratic form $y^T\Sigma^{-1} y$ is unique among all signals $y$ in $G_{\Pi,L}x$. 
\end{cor}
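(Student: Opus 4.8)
The plan is to deduce Corollary~\ref{cor:prior} from Lemma~\ref{lemma:unique_perm} by a finite union-of-null-sets argument. First I would record that every element of the orbit $G_{\Pi,L}x$ has the form $Rx$ for some permutation matrix $R\in G_{\Pi,L}$, and that this group is \emph{finite}, containing $K!L^K$ elements. Consequently the orbit is a finite set, the minimum of $\norm{\cdot}_{\Sigma^{-1}}^2$ over it is automatically attained, and only its \emph{uniqueness} is in question.

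Next I would translate a tie in the objective into membership in one of the exceptional sets of Lemma~\ref{lemma:unique_perm}. Two orbit points $R_1x$ and $R_2x$ share the same value precisely when $\norm{R_1x}_{\Sigma^{-1}}^2=\norm{R_2x}_{\Sigma^{-1}}^2$, i.e.\ when $x\in\mathcal{Z}_{R_1,R_2}$. The hypothesis of the corollary---that $\Sigma^{-1}$ and $R=R_2R_1^T$ fail to commute for every pair $R_1\neq R_2$---is exactly the condition under which Lemma~\ref{lemma:unique_perm} certifies that $\mathcal{Z}_{R_1,R_2}$ has measure zero in $\Omega$. (Equivalently, since $R=R_2R_1^T$ ranges over the non-identity elements of $G_{\Pi,L}$ as $R_1\neq R_2$ vary, the hypothesis says that no non-trivial group element commutes with $\Sigma^{-1}$.) I would therefore apply the lemma once to each ordered pair $(R_1,R_2)$ with $R_1\neq R_2$.

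Finally I would set $\mathcal{N}:=\bigcup_{R_1\neq R_2}\mathcal{Z}_{R_1,R_2}$. Because $G_{\Pi,L}$ is finite there are only finitely many such pairs, so $\mathcal{N}$ is a finite union of null sets and hence itself of measure zero. For any $x\in\Omega\setminus\mathcal{N}$ no two distinct group elements give equal objective values; in particular distinct group elements yield distinct signals (equal signals would trivially share the $\Sigma^{-1}$-norm and thus lie in some $\mathcal{Z}_{R_1,R_2}$), so the orbit consists of exactly $K!L^K$ points carrying pairwise distinct values of the quadratic form. The smallest of these values is then attained at a single signal, which is the desired uniqueness, and it holds off the null set $\mathcal{N}$, i.e.\ for almost every $x$.

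I do not expect a serious obstacle here: with Lemma~\ref{lemma:unique_perm} in hand the corollary is essentially bookkeeping. The only point that genuinely must be respected is the \emph{finiteness} of $G_{\Pi,L}$, since an infinite union of null sets need not be null; this is what legitimizes the passage from ``each pairwise tie is exceptional'' to ``the whole bad set is negligible.'' A secondary item to double-check is that the corollary's commutation hypothesis is quantified over exactly the pairs to which the lemma is applied, so that every $\mathcal{Z}_{R_1,R_2}$ entering the union is indeed covered.
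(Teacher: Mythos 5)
Your proposal is correct and takes essentially the same route as the paper: the paper's justification is exactly your finite union-of-null-sets argument (the permutation group is finite, so there are finitely many pairs $(R_1,R_2)$; each tie set $\mathcal{Z}_{R_1,R_2}$ has measure zero by Lemma~\ref{lemma:unique_perm} under the non-commutation hypothesis; hence the union is null and the minimum over the orbit is unique off it). Your observation that coincident orbit points $R_1x=R_2x$ are automatically absorbed into some $\mathcal{Z}_{R_1,R_2}$ is a detail the paper leaves implicit, but it does not alter the argument.
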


The case when $\Sigma^{-1}$ is a circulant matrix is of particular importance. Such a prior, reflecting a prior on the signal's power spectrum, is popular in many signal processing tasks, as well as in cryo-EM; see for instance~\cite{scheres2012relion}. 
	 In this case, both the prior and the likelihood function (and thus the posterior) are 
shift-invariant, and therefore any signal is indistinguishable from its cyclic shifts. To account for this symmetry, we derive a distinct uniqueness result---up to a circular shift---for circulant matrices.

\begin{prop} \label{prop:symmetry_in_circulant_sigma}
Assume $\Sigma^{-1}$ is a circulant, positive-definite matrix of size $n>2$. Then:
\begin{enumerate}
\item
The $\Sigma^{-1}$ norm $\norm{ \cdot }_{\Sigma^{-1}}$ is invariant under cyclic shifts. Consequently, we may consider the quadratic form $y^T \Sigma^{-1} y$ as a function over equivalence classes in $G_{\Pi,L} x$, where two vectors are equivalent if one is a cyclic shift of the other.

\item
If all eigenvalues of $\Sigma^{-1}$ are distinct, then for almost every signal $x$ the minimum of the quadratic form $y^T\Sigma^{-1} y$ is unique over the equivalence classes of $G_{\Pi,L} x$.
\end{enumerate}
\end{prop}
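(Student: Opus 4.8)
The plan is to imitate the argument behind Lemma~\ref{lemma:unique_perm} and Corollary~\ref{cor:prior}, but with the full permutation group replaced by its intersection with the symmetries of the circulant form. For part~1, I would first record that a circulant matrix commutes with the elementary cyclic shift, hence with every shift $R_s$, $s=0,\dots,n-1$. Since each $R_s$ is an orthogonal permutation matrix with $R_s^T=R_{-s}=R_s^{-1}$, this gives $\|R_s y\|_{\Sigma^{-1}}^2 = y^T R_s^{-1}\Sigma^{-1}R_s\, y = y^T\Sigma^{-1}y$, so the quadratic form is constant on each cyclic-shift orbit. Because the cyclic shifts form a subgroup of $G_{\Pi,L}$ (a global shift by one cyclically permutes the sub-signals and shifts one of them; a shift by $K$ shifts every sub-signal by one), the orbit $G_{\Pi,L}x$ splits into cyclic-shift classes and the form descends to a well-defined function on them, which is part~1.

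For part~2 the reduction is exactly that of Lemma~\ref{lemma:unique_perm}: for $R_1,R_2\in G_{\Pi,L}$ the tie set $\{x:\|R_1x\|_{\Sigma^{-1}}^2=\|R_2x\|_{\Sigma^{-1}}^2\}$ is null unless $R:=R_2R_1^T$ commutes with $\Sigma^{-1}$. As $G_{\Pi,L}$ is finite, it therefore suffices to show that the only elements of $G_{\Pi,L}$ commuting with $\Sigma^{-1}$ are the cyclic shifts. Granting this, every pair $(R_1,R_2)$ lying in distinct cyclic-shift classes has $R_2R_1^T$ non-commuting with $\Sigma^{-1}$, so a finite union of null tie sets is null, and off this null set no two distinct classes attain the same value; hence the minimizing class is unique for almost every $x$.

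The heart of the argument is characterizing the permutations commuting with $\Sigma^{-1}$, and I would do this through a single eigenspace. A real symmetric circulant has eigenvalues $\lambda_j=\sum_k c_k e^{2\pi\I jk/n}$ obeying the forced pairing $\lambda_j=\lambda_{n-j}$; the hypothesis that the eigenvalues are ``distinct'' must be read as the absence of any coincidence beyond this pairing, so that $\lambda_1=\lambda_{n-1}$ has multiplicity exactly two with eigenspace $W=\mathrm{span}\{v,\bar v\}$, where $v[m]=e^{2\pi\I m/n}$. Any permutation $R$ commuting with $\Sigma^{-1}$ preserves $W$, so $Rv=\alpha v+\beta\bar v$, i.e.\ $e^{2\pi\I R^{-1}(m)/n}=\alpha e^{2\pi\I m/n}+\beta e^{-2\pi\I m/n}$ for every $m$. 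Comparing moduli shows (for $n>2$) that $\alpha\bar\beta=0$, and since the left side is a unit power of the root of unity, exactly one of $\alpha,\beta$ vanishes, forcing $R^{-1}(m)\equiv m+s$ (a cyclic shift) or $R^{-1}(m)\equiv -m+s$ (a reversal). Thus the centralizer of $\Sigma^{-1}$ in the permutation group is the dihedral group generated by the shifts $R_s$ and the reversal $r:m\mapsto -m$. It remains to intersect this group with $G_{\Pi,L}$: the cyclic shifts already belong to $G_{\Pi,L}$, while every element of $G_{\Pi,L}$ acts on each sub-signal index set $\{k+K\ell\}_\ell$ of~\eqref{eq:sub_signals} by a cyclic shift, hence orientation-preservingly, whereas $r$ (and every $rR_s$) sends $k+K\ell\mapsto -k-K\ell$ and reverses the internal order of each sub-signal, which is not a cyclic shift once $L>2$. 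Therefore the centralizer meets $G_{\Pi,L}$ only in the cyclic shifts, completing the reduction.

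I expect the main obstacle to be the two places where the circulant structure collides with the hypotheses. First is the eigenvalue condition itself: a symmetric circulant can never have genuinely simple eigenvalues for $n>2$, because $\lambda_j=\lambda_{n-j}$ is unavoidable, so ``distinct eigenvalues'' must be interpreted up to this forced degeneracy, and one must then contend with the extra reversal symmetry it produces. Second, and most delicate, is the combinatorial verification that this reversal, together with every reflection $rR_s$, lies outside $G_{\Pi,L}$; this is precisely what collapses the ambiguity to cyclic shifts rather than the full dihedral group, and it is where the assumption $n>2$ (and effectively $L>2$, the regime of interest since $L\gtrsim\sqrt{6M}$) enters, the borderline $L\le 2$ requiring separate comment.
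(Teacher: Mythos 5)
Your proposal is correct in substance, and for part 1 and for the overall reduction in part 2 (tie sets via Lemma~\ref{lemma:unique_perm}, finiteness of the permutation group, hence a null exceptional set) it coincides with the paper's proof. Where you genuinely diverge is in the key lemma characterizing which permutations commute with $\Sigma^{-1}$. The paper (Lemma~\ref{lemma:circulant}) assumes every eigenvalue of $\Sigma^{-1}$ has multiplicity one; then each eigenspace is a line spanned by a DFT vector $w_\ell$, any commuting permutation satisfies $Rw_\ell=\alpha_\ell w_\ell$, and a direct entrywise computation shows $R$ is a cyclic shift---no structure of $G_{\Pi,L}$ is used beyond its finiteness. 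You instead observe that for a real \emph{symmetric} circulant the pairing $\lambda_j=\lambda_{n-j}$ is forced, so genuinely simple eigenvalues are impossible for $n>2$; you therefore work in the two-dimensional eigenspace $\mathrm{span}\{v,\bar v\}$ with $v[m]=e^{2\pi\I m/n}$, deduce that the centralizer of $\Sigma^{-1}$ in the permutation group is the dihedral group (shifts and reversals), and then exclude the reversals by showing they reverse the internal order of each sub-signal and hence lie outside $G_{\Pi,L}$ once $L>2$. This is a genuine difference, and your version is arguably the more faithful one: in the Gaussian-prior setting $\Sigma^{-1}$ is symmetric (symmetry of $A$ is also used in the proof of Lemma~\ref{lemma:unique_perm}), so the paper's multiplicity-one hypothesis can never hold and part 2 as literally stated has no content there, whereas your reading yields a non-vacuous statement. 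The price is the extra combinatorial step and the restriction $L>2$, which is genuinely needed: for $L\le 2$ an internal reversal of a sub-signal is itself a cyclic shift, so reflections enter $G_{\Pi,L}$ and tie the minimizer with its reflection, which generically lies in a different shift-class.

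One step of your centralizer computation needs patching: for $n=4$ the modulus comparison does not give $\alpha\bar\beta=0$, since $e^{4\pi\I m/n}\in\{\pm1\}$ and constancy of the moduli only yields $\mathrm{Re}(\alpha\bar\beta)=0$. The conclusion still holds there---the entries of $Rv$ must be a permutation of $\{1,\I,-1,-\I\}$, while $\alpha v+\beta\bar v=\left(\alpha+\beta,\,\I(\alpha-\beta),\,-(\alpha+\beta),\,-\I(\alpha-\beta)\right)$, and checking the finitely many cases forces $\alpha=0$ or $\beta=0$---but as written your argument has a hole at this single value of $n$. For every other $n>2$, the numbers $e^{4\pi\I m/n}$ run over all $\left(n/\gcd(n,2)\right)$-th roots of unity, which are at least three distinct values summing to zero, and your modulus argument goes through.
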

The proposition is proved in Appendix~\ref{app:R_invariant}.

\section{An expectation-maximization algorithm }
\label{sec:EM}

Our theoretical study is based on invariant features. Conceptually, it suggests a two-stage procedure: it begins by identifying the orbit $G_{\Pi,L}x$, and 
then choosing a  unique signal according to the prior.
While identifying the orbit $G_{\Pi,L}x$ efficiently using bispectrum demixing is possible~\cite{boumal2018heterogeneous,weinthesis}, it is unclear how to devise a tractable algorithm for the second step.
As an alternative, we formulate an EM algorithm, described below, which aims to achieve the maximum of the posterior distribution by maximizing the likelihood function and the prior simultaneously~\cite{dempster1977maximum}. \rev{EM  is known to work quite well in many practical scenarios; see for instance its application to cryo-EM experimental datasets~\cite{scheres2012relion,punjani2017cryosparc}.
 In what follows, we formulate EM for MRA with a general linear operator
  \begin{equation}
 	y = T R_s x + \varepsilon, 
 \end{equation}
 where $s$ is drawn from a uniform distribution on  a discrete grid $S_M$ with $M$ points, and $T$  is a general linear operator (not necessarily a sampling matrix).
 For the special case of super-resolution, a similar algorithm was already derived by~\cite{woods2005stochastic}.}

EM is a common framework to compute the  maximum aposteriori estimator (MAP) 
Hereafter, we formulate  EM for the general model

Given a set of $N$ independent observations $y_1,\ldots,y_N$, the \rev{log-}posterior distribution \rev{$\log p(x|y_1,\ldots,y_N)$} is 
proportional to \rev{$\log p(y_1,\ldots,y_N|x)p(x)$}, where  
\rev{
	\begin{equation} \label{eq:likelihood_em}
		\log p(y_1,\ldots,y_N|x)= \sum_{i=1}^N\log\sum_{{s_{\ell}}\in S_M}e^{-\frac{1}{2\sigma^2}\|y_i-TR_{s_\ell} x\|_2^2} + \text{constant},
	\end{equation}
}
is the \rev{log-}likelihood function, and $p(x)$ is a prior on the signal. 
We assume that the signal is drawn  from a Gaussian prior with zero mean and known covariance $\Sigma$ so that ${p(x)\sim\mathcal{N}(0,\Sigma)}$. 
EM aims to maximize the posterior iteratively, where each iteration consists of two steps.
The first step, called  E-step, computes the expected value of  the likelihood $x$ (note, not the marginalized likelihood) with respect to the circular shifts (i.e., the nuisance variables), given the current estimate of the signal $x_t$ and the data $y_1,\ldots,y_N$:
\rev{
\begin{equation}
\begin{split}
Q(x|x_t) &= \E_{s_1,\ldots,s_N|y_1,\ldots,y_N,x_t}\left\{\log p(y_1,\ldots,y_N,s_1,\ldots,s_N|x) +\log p(x) \right\} \\
& = -\frac{1}{2\sigma^2}\E_{s_1,\ldots,s_N|y_1,\ldots,y_N,x_t}\left\{\sum_{i=1}^{N}\|y_i - TR_{s_i}x\| \right\} - \frac{1}{2}x^T\Sigma^{-1}x + \text{constant}\\
& = -\frac{1}{2\sigma^2}\sum_{i=1}^{N}\sum_{s_\ell\in S_M}w_{i,\ell}\|y_i - TR_{s_\ell}x\| - \frac{1}{2}x^T\Sigma^{-1}x + \text{constant},
\end{split}
\end{equation} 
}
where 
\begin{equation} \label{eq:em_weights}
w_{i,\ell} =  \frac{e^{\frac{-1}{2\sigma^2}\|y_i-TR_{s_\ell} x_t\|^2 }}{\sum_{{s_\ell}\in S_M}e^{\frac{-1}{2\sigma^2}\|y_i-TR_{s_\ell} x_t\|^2 }}.
\end{equation}
{The}  second step, called M-step, maximizes $Q$ with respect to $x$. 
The solution is obtained by solving the linear system of equations
\begin{equation} \label{eq:em_linear_system}
Ax = b,
\end{equation}
where 
\begin{align}
A :=&  \Sigma^{-1} + \frac{1}{\sigma^2}\sum_{i,\ell}\omega_{i,\ell} (R_{\ell}^{-1}T^TTR_{\ell}),\\ 
b :=&   \frac{1}{\sigma^2} \sum_{i,\ell}\omega_{i,\ell}R_\ell^{-1}T^Ty_i.
\end{align}
The EM algorithm iterates between computing the weights~\eqref{eq:em_weights} and solving the linear system~\eqref{eq:em_linear_system} until convergences. 
In our implementation, the algorithm halts when the relative difference between the posterior of two consecutive iterations falls below $10^{-5}$. 
In general, EM {does} not converge to the global maximum of the posterior distribution; however, 
each iteration is guaranteed not to decrease the posterior~\cite{dempster1977maximum}. 
In addition, several recent works derived intriguing theoretical results for EM under specific statistical models.  See for example~\cite{balakrishnan2017statistical,daskalakis2016ten}, and in particular~\cite{fan2020likelihood} that analyzes EM for the MRA model. 

In the special case in which the linear operator $T$ is the sampling operator~\eqref{eq:model}, computing the weights and constructing $A$ and $b$ reduces to computing a set of correlations; this can  be executed  efficiently using FFT. \rev{For EM implementation when  a blurring kernel is included, see~\cite{woods2005stochastic}.}

\rev{ \paragraph{On the connection between  maximum likelihood estimation and the invariant approach.} 
A recent paper by Katsevich and Bandeira~\cite{katsevich2020likelihood}  studies Gaussian mixture models, for which heterogeneous MRA is a special case, in the parametric setup considered in this work: $N\to\infty$, SNR$\to 0$, and fixed $M$. In particular, they show that  log-likelihood maximization is equivalent to an asymptotic series of  successively higher moment matching problems. In this sense, a method based on the bispectrum (a third-order moment), as we use for the analysis, can be thought of as a third-order approximation of the likelihood function.
}

\section{Numerical results} \label{sec:numerics}

We conducted three experiments to examine the performance of the EM algorithm. The first experiment demonstrates resolving  two adjacent peaks from low-resolution observations. The next two experiments \rev{study} the performance of the algorithm as a function of noise level and the number of samples. The code for all experiments and the EM algorithm is publicly available at \url{https://github.com/TamirBendory/MRA-SR}. 

In all experiments, \rev{we set a  prior on the  signal's power spectrum, and  therefore,} to account for the circular shift symmetry, the relative recovery error is defined as 
\begin{equation}
\mbox{relative error} = \min_{\ell\in\mathbb{Z}^M}\frac{\|R_\ell x_\text{est} - x\|}{\|x\|},
\end{equation}
where $x\in\R^M$ is the {underlying} signal, and $x_\text{est}$ is the output of the EM algorithm.
The SNR is defined as 
$\mbox{SNR} = \|x\|^2/(M\sigma^2)$. 
The EM iterations terminate when either a maximal number of iterations is reached, or the relative absolute difference of the posterior between two consecutive iterations drop below a tolerance parameter. 
In all experiments,  the maximal number of EM iterations was set to be 100 iterations, and   the tolerance parameters was~$10^{-5}$.   
The EM may be initialized from multiple random points and thus produce different  estimators. 
Among those estimators, we choose the one with the largest 
posterior. The posterior is computed at each {EM} iteration. 

\paragraph{Experiment 1.} 
The signal in this experiment is of length $M=120$ with bandwidth  (the largest non-zero frequency, see~\eqref{eq:continuous_signal_sampled})  $B=15$. 
 We generated $N=10^4$ observations by shifting the signal  and sampling it at $L=15$ points, corresponding to half of the Nyquist {sampling} rate.
Then, an i.i.d.\ Gaussian noise was added, corresponding to SNR = 1. 
We ran the EM algorithm from five random initial points; each trial required 13 to 17 iterations to converge. 
Since the signal (only in this experiment) is bandlimited, in each iteration the current estimate is projected onto the low $B=15$ frequencies. 
The target and estimated signals are presented in Figure~\ref{fig:recovery_example}; 
the relative recovery error is 0.0614. 
 Figure~\ref{fig:recovery_example_per_freq} displays the relative error per frequency. As can be seen, the relative error of frequencies above the Nyquist rate is still quite low, indicating that the EM algorithm {resolves} high frequencies accurately.

\begin{figure}[ht]
	\centering
	\includegraphics[width=.6\linewidth]{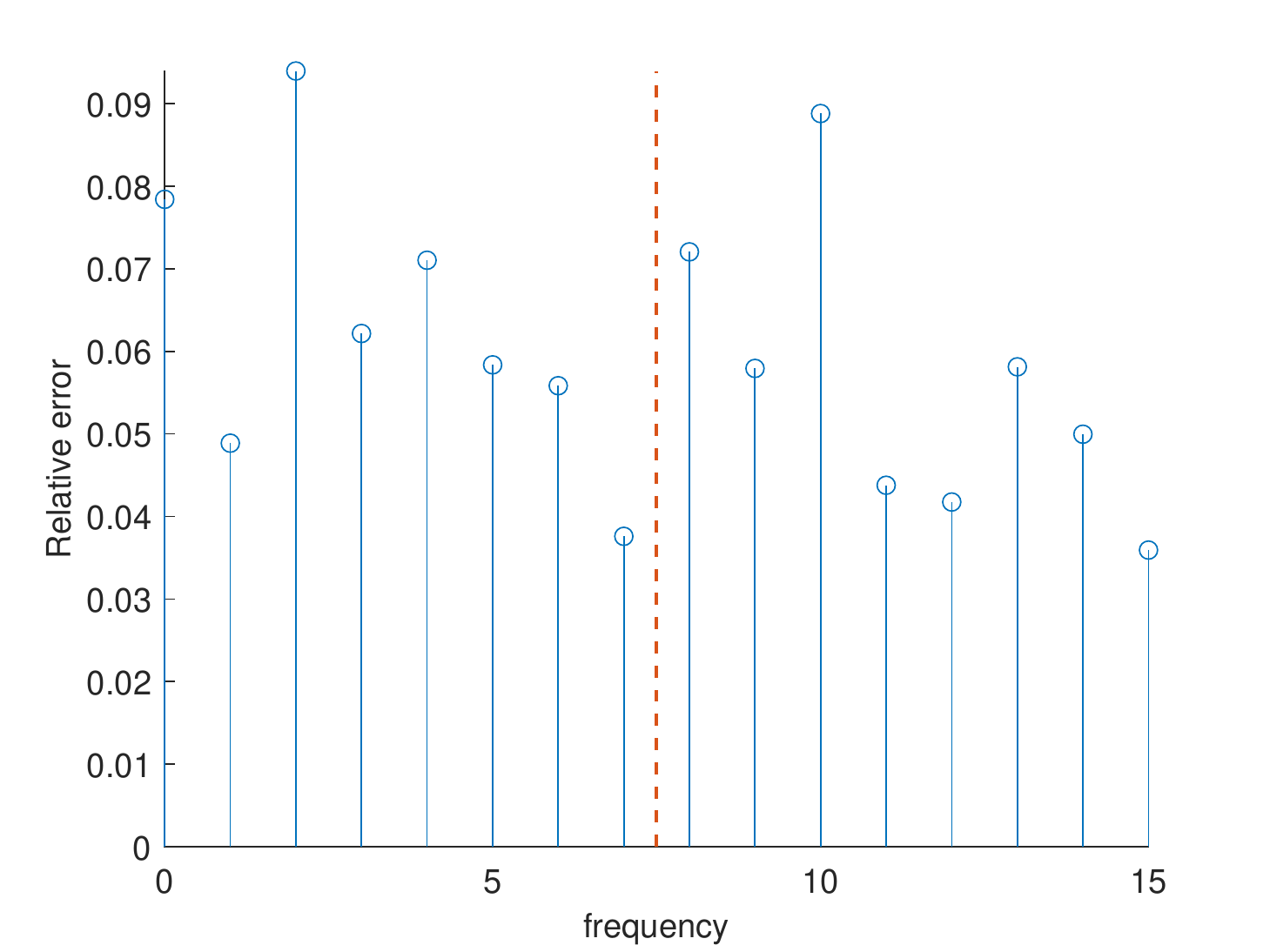}
	\caption{\label{fig:recovery_example_per_freq} 
		Recovery error per frequency of the \rev{experiment} presented in Figure~\ref{fig:recovery_example}. The figure indicates that the EM algorithm succeeds to resolve frequencies beyond the largest frequency {determined} by the Nyquist sampling rate $L/2$ (vertical red line).
	}
\end{figure}

\paragraph{Experiment 2.}
Figure~\ref{fig:snr} presents the error curve as a function of the SNR, in the high and low SNR regimes.
For each SNR value, 50 trials were conducted and we present the median error.
A signal of length $M=64$ was drawn from a Gaussian distribution with zero mean, and a circulant covariance matrix, corresponding to power spectrum decaying linearly, that is, as $1/f$. Following the circular shift, each observation was sampled at $L=32$ equally-spaced points---corresponding to half of the Nyquist {sampling} rate. 

Figure~\ref{Fig3:high} presents the relative error as a function of the SNR, for 30 SNR values sampled uniformly on a logarithmic scale between $10^{0.2}$ to $10^{2}$, and $N=10^2$ observations; this reflects the high SNR regime. 
Unfortunately,  the EM seems to suffer from a flaw: it should be initialized from many points (among them we choose the one with the largest posterior value) in order to result in a consistent recovery. In this experiment, we initialized the algorithm from 1000 points; the computational load is still quite cheap since each trial requires only a few iterations (around 5). Yet, it suggests that EM may not be the optimal computational scheme in the high SNR regime.  The slope of the error curve is approximately \rev{-1/2}. \rev{Since the SNR is proportional to $1/\sigma^2$, this  indicates that the error  scales as} $\sigma$---the optimal estimation rate even if the circular shifts were known. 

Figure~\ref{Fig3:low} shows a similar experiment for SNR values ranging between  $10^{-0.6}$ to 1 and $N=10^5$ observations. 
In this low SNR regime, the EM algorithm seems to be more consistent, and thus we initialized it from merely 20 random points.
In this regime,  the slope  of the error curve becomes steeper and the error slope is smaller than $-1$, implying that \rev{the error scales faster than~$\sigma^4$}.
\rev{This indicates, in line with previous works on MRA (with finite $M$), that the estimation rate in the high and low SNR regimes is drastically different~\cite{bandeira2017optimal,perry2017sample,abbe2018multireference,abbe2018estimation}.}
In fact, our analysis predicts that as SNR$\to 0$, the slope of the error curve would tend to $\mbox{SNR}^{-3/2}$; see Section~\ref{sec:analysis}.

\begin{figure}
	\begin{subfigure}[ht]{0.48\columnwidth}
		\centering
		\includegraphics[width=\columnwidth]{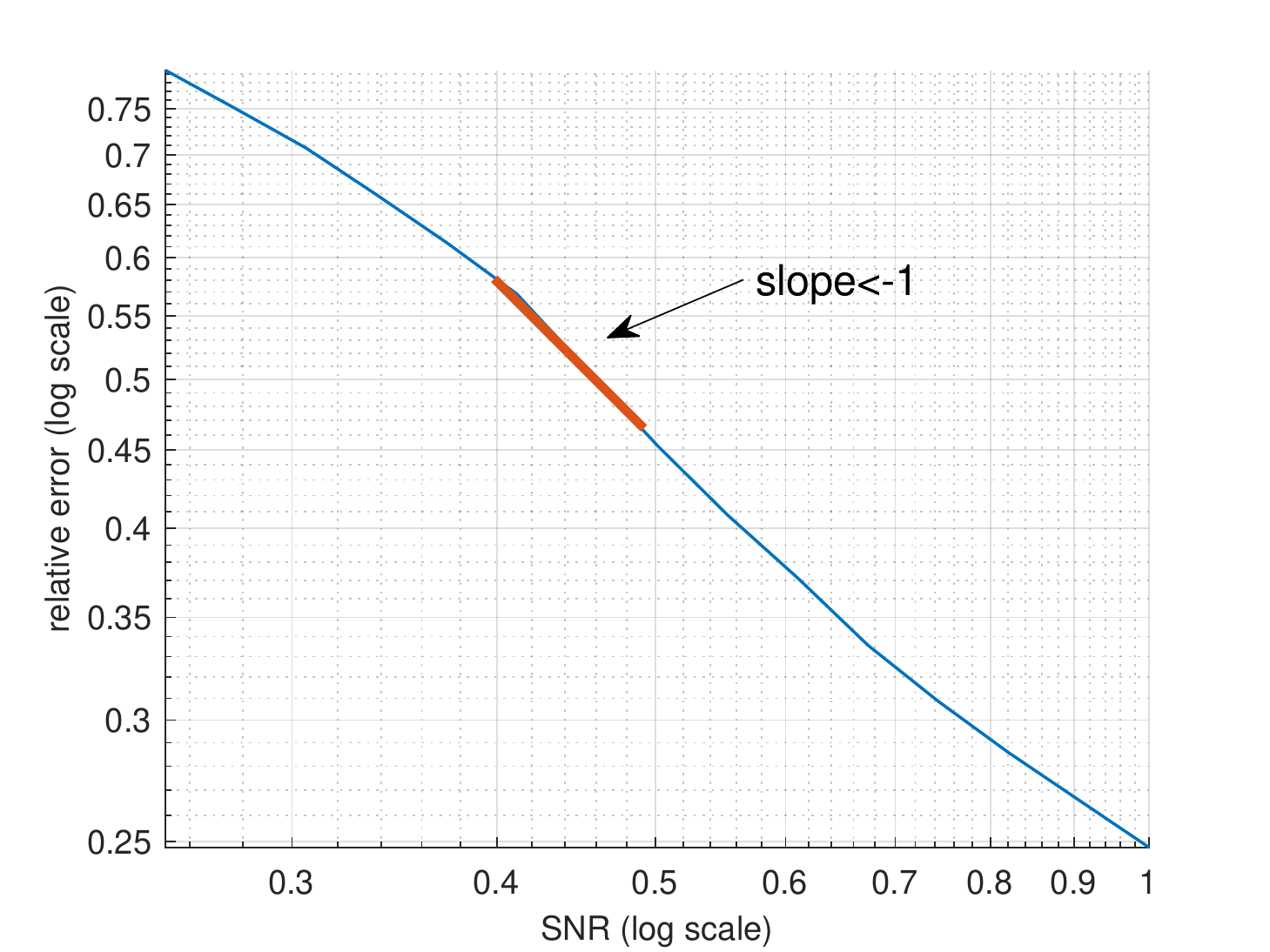}
		\caption{\label{Fig3:low} Low SNR}
	\end{subfigure}
	\hfill
	\begin{subfigure}[ht]{0.48\columnwidth}
	\centering
	\includegraphics[width=\columnwidth]{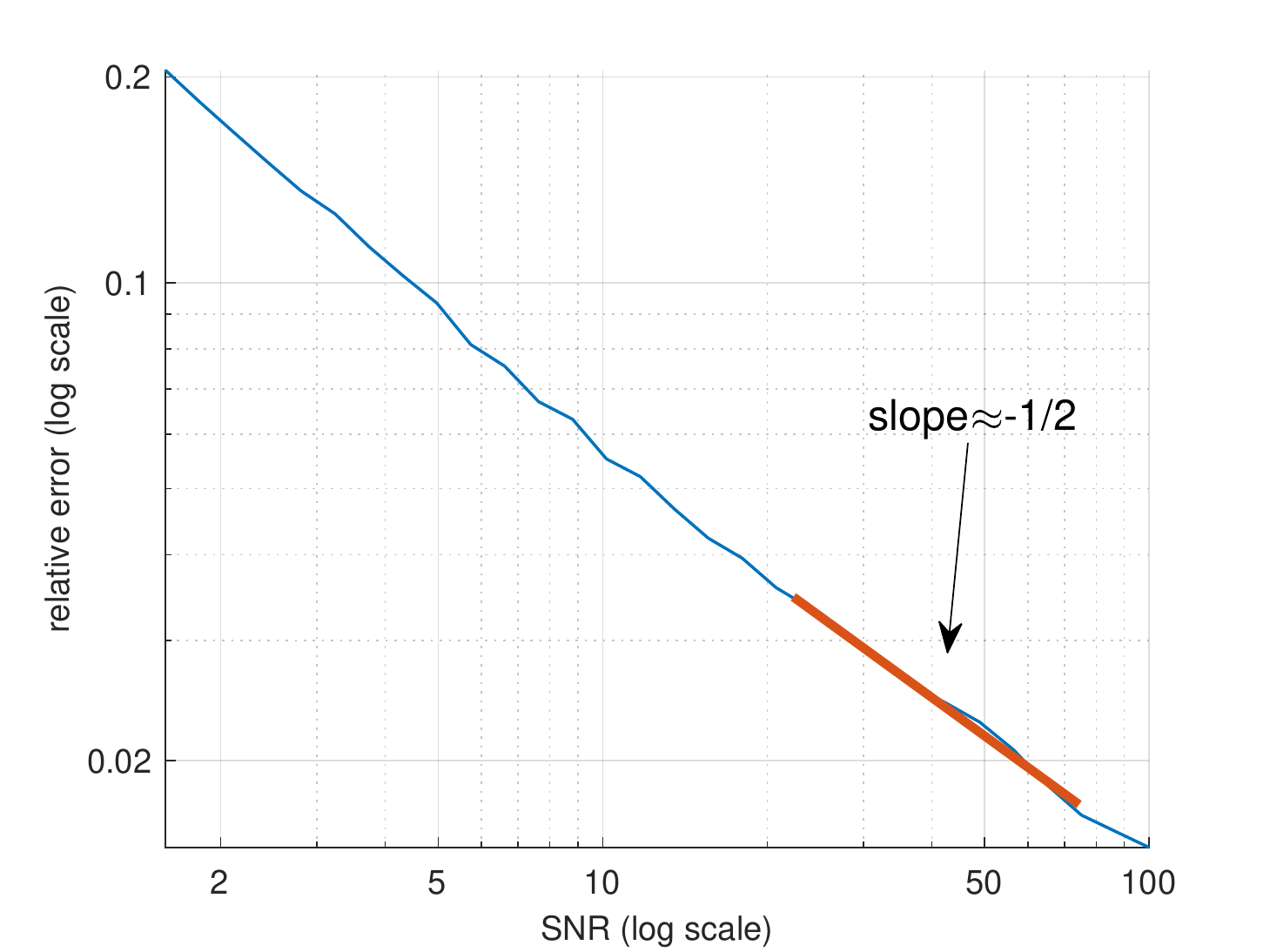}
	\caption{\label{Fig3:high} High SNR}
\end{subfigure}

	\caption{\label{fig:snr} Relative estimation error as a function of the SNR. In the high SNR regime, the relative error scales as SNR$^{-1/2}$, which is the same estimation rate as if there were no shifts (namely, the estimation rate of averaging independent Gaussian variables). In the low SNR regime, the error decays faster than SNR$^{-1}$, demonstrating a sharp transition from the high SNR regime.}
\end{figure}

\paragraph{Experiment 3.}
Figure~\ref{fig:error_L} examines the recovery error for  different values of~$M$ and $L$. For each $M$, we chose values of $L$ so that $M/L$ is an integer.
The signals were generated as in Experiment 2 with SNR = 5 \rev{and $N=1000$}, and the EM was initialized from 50 random locations {in each trial}.
For each pair (M,L), the mean error over 50 trials was recorded. 
The red vertical dashed  line indicates $L=M^{2/3}$; this is the conjectured computational recovery limit for hMRA, namely, for recovering the orbit~$G_{\Pi,L}x$ (see discussion in Section~\ref{sec:computational_considerations}). 
Notwithstanding, we get relatively small recovery error only for much larger values of $L$, {suggesting} that the super-resolution problem is {computationally} more challenging than hMRA.
In particular, our theoretical analysis is split into two stages: recovering the orbit $G_{\Pi,L}x$, and recovering $x$ from the orbit; the latter depends only on the prior, and not on the data. 
In contrary, the EM algorithm aims to implicitly carry out both stages simultaneously. 
We believe that the second stage\rev{, together with the smoothness of the signals (see Section~\ref{sec:computational_considerations}),} is the reason  the performance of EM for super-resolution is inferior to what was demonstrated in previous MRA setups~\cite{bendory2017bispectrum,abbe2018multireference,ma2018heterogeneous}. 

\begin{figure}
	\begin{subfigure}[ht]{0.32\columnwidth}
		\centering
		\includegraphics[width=1.1\columnwidth]{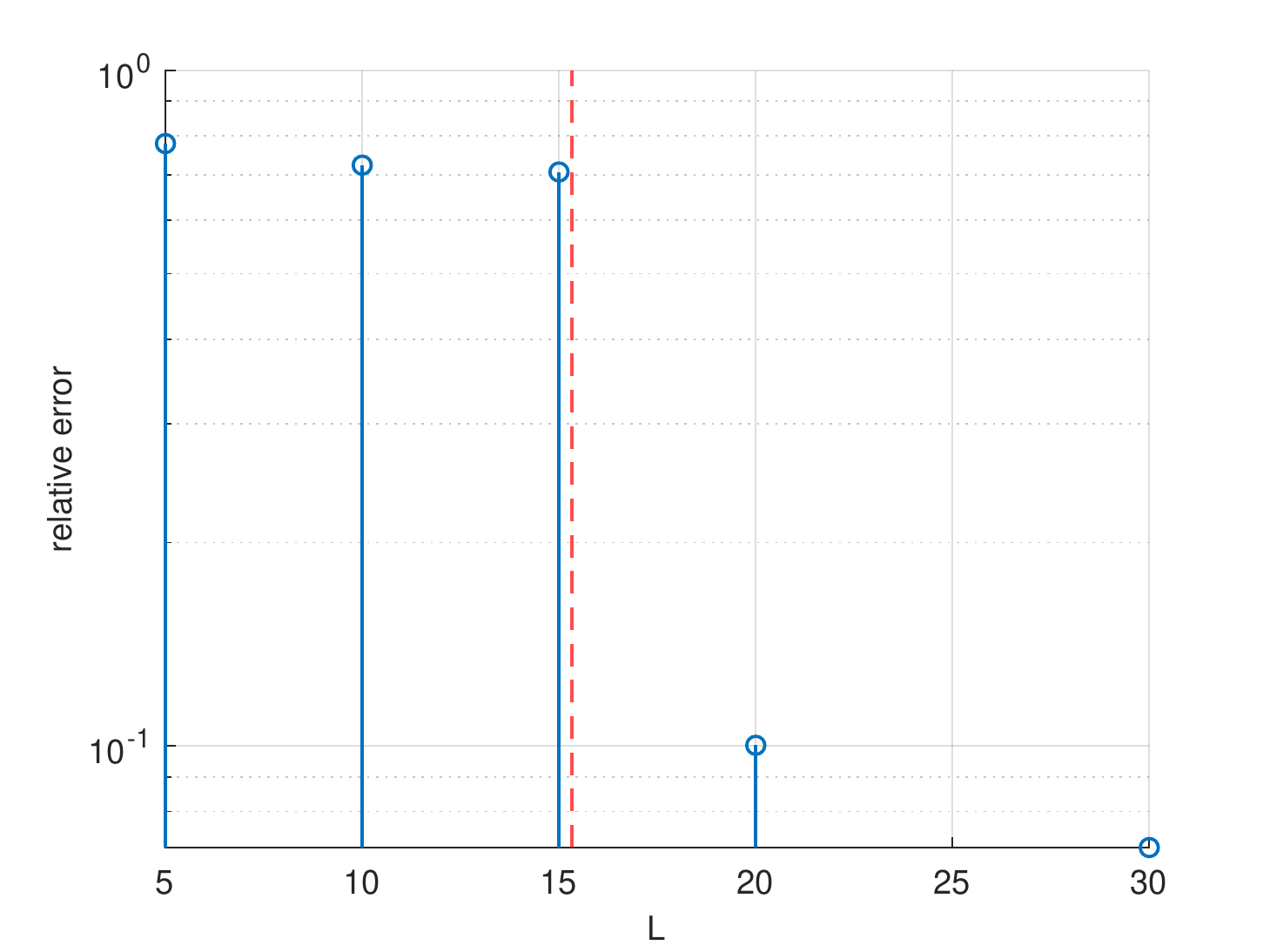}
		\caption{$M=60$}
	\end{subfigure}
	\hfill
	\begin{subfigure}[ht]{0.32\columnwidth}
		\centering
		\includegraphics[width=1.1\columnwidth]{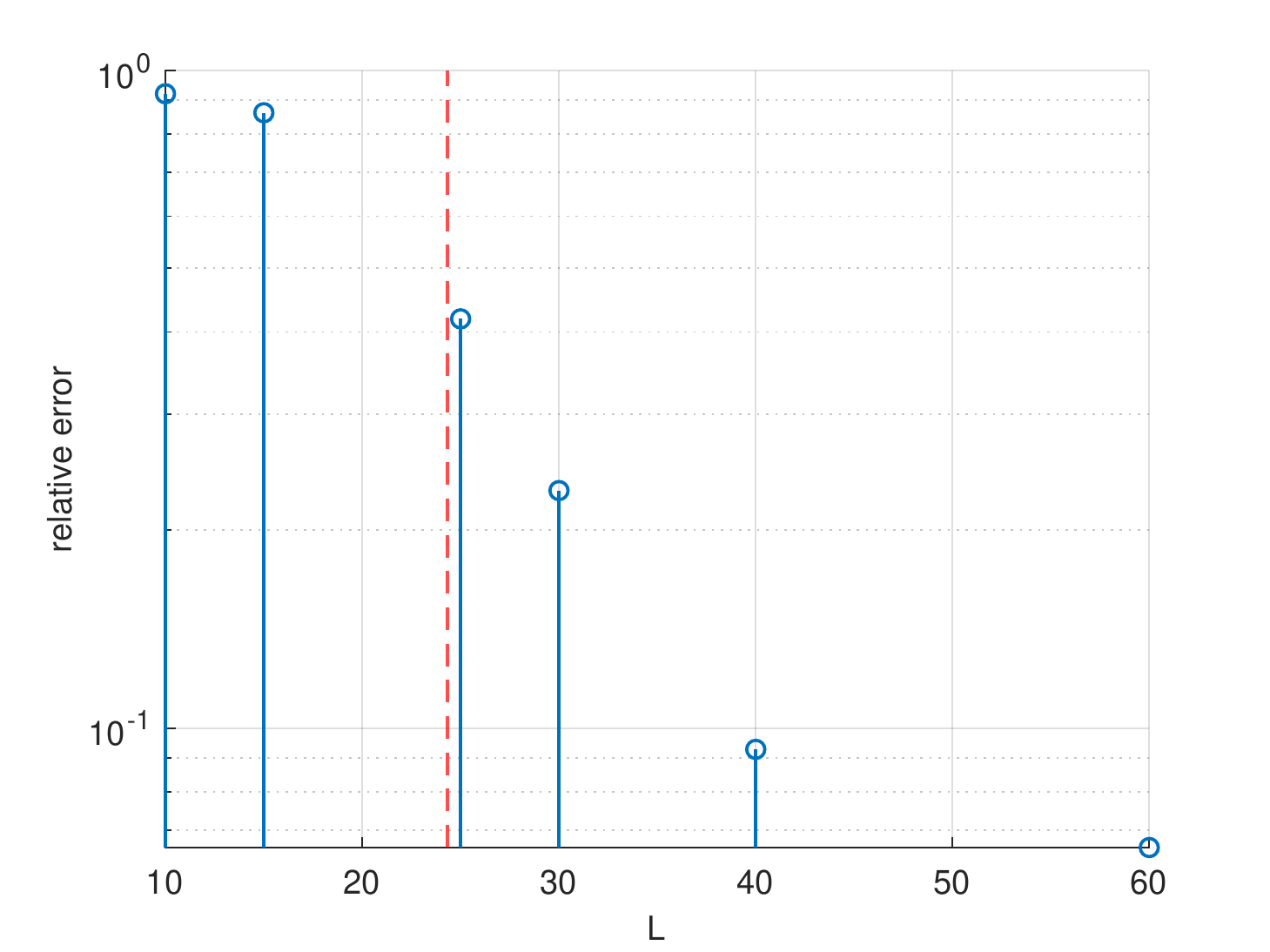}
		\caption{$M=120$}
	\end{subfigure}
	\hfill
	\begin{subfigure}[ht]{0.32\columnwidth}
		\centering
		\includegraphics[width=1.1\columnwidth]{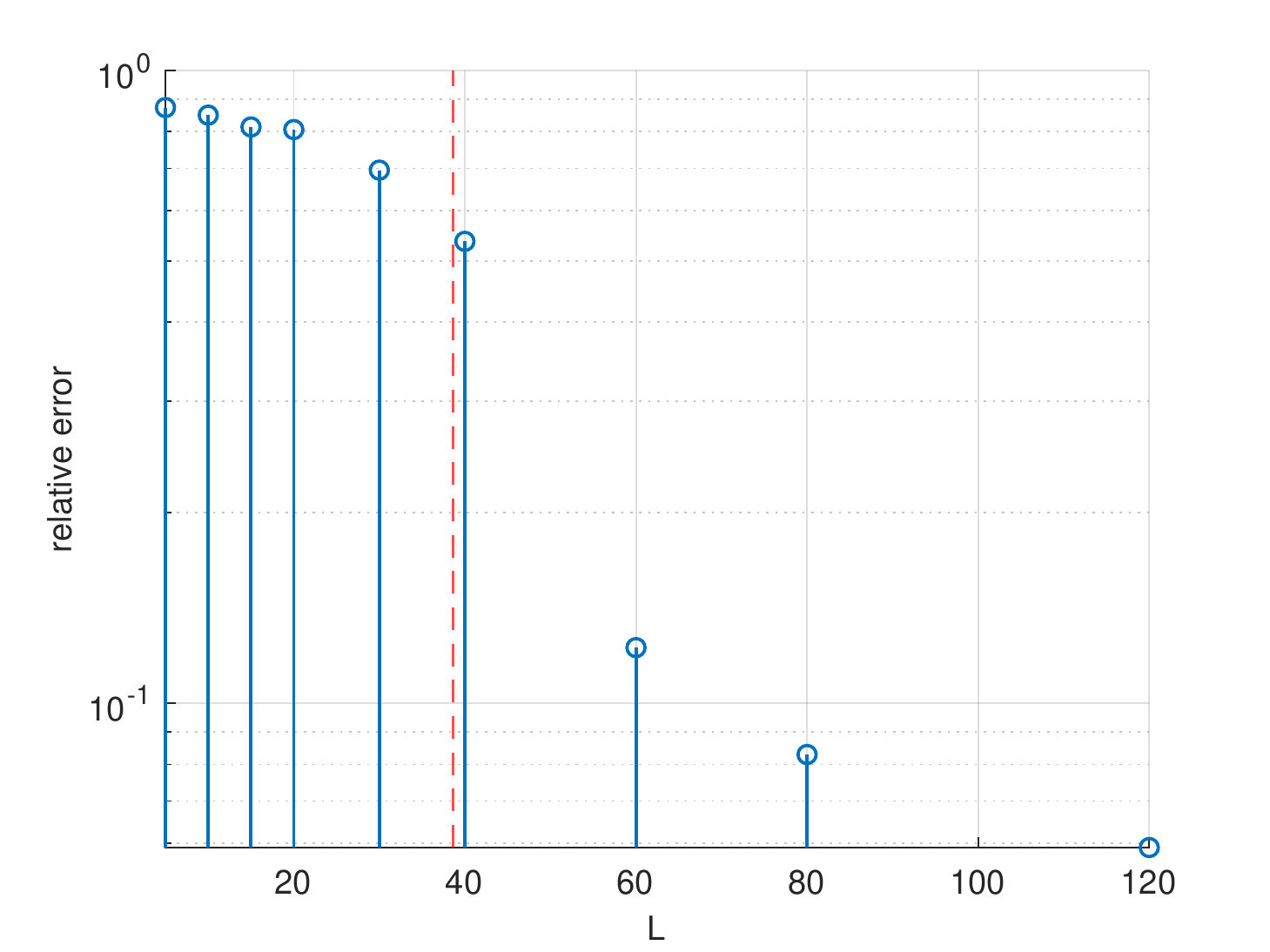}
		\caption{$M=240$}
	\end{subfigure}
	\caption{\label{fig:error_L} Relative recovery error as a function of $L$ for different values of $M$. The red dashed line indicates $L=M^{2/3}$. The results suggest that the super-resolution problem is significantly harder  than hMRA.}
\end{figure}

\section{Discussion} \label{sec:future_work}

\paragraph{Super-resolution limits.}
This work analyzes the super-resolution from multiple observations problem in a  noisy environment using the third-order auto-correlation.
To use higher-order auto-correlations, more observations should be collected: the number of observations needs to scale  {as} $\sigma^{2q}$ to estimate the $q$-th {order} auto-correlation accurately.
{The} $q$-th auto-correlation provides $O(L^{q-1})$ polynomial equations of the sought signals.
Based on our analysis and the reduction of the super-resolution problem to the hMRA model~\eqref{eq:heter_mra}, we expect that the $q\geq3$ auto-correlation would identify $M=O(L^{q-1})$ grid points. Such a  result will follow directly from a generalization of~\cite{bandeira2017estimation} to higher-order auto-correlations. This leads us to the following conjecture:
\begin{conj}
	Suppose that $N$ observations from~\eqref{eq:model} are collected and each observation is sampled at $L$ equally-spaced locations. 
	Then, in the low SNR regime $\sigma\to\infty$, if $N/\sigma^{2q}\to\infty$ for some $q\geq 3$, one can identify up to $M= O(L^{q-1})$ grid points. In other words, only $L=O(M^{1/{(q-1)}})$ samples per observation suffice for signal identification. In particular, for $N\to\infty$ and any fixed noise level (that might be arbitrarily high), there is no theoretical  limit on the achievable resolution.
\end{conj}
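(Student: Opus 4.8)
The plan is to reproduce, at arbitrary order $q$, the three-stage argument that established the $q=3$ case: reduction to heterogeneous MRA, de-biased estimation of the invariants, and algebraic identifiability of the orbit $G_{\Pi,L}x$, followed by selection of a unique signal via the prior. The reduction of~\eqref{eq:model} to the hMRA model~\eqref{eq:heter_mra} on the $K=M/L$ sub-signals $x_0,\ldots,x_{K-1}\in\R^L$ is independent of which invariants one uses, so it transfers verbatim. What must be extended is the estimation step and, crucially, the identifiability step, so I would organize the proof around those two.

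First I would generalize the de-biasing computation behind~\eqref{eq:mix_invariants}. When $N/\sigma^{2q}\to\infty$, the empirical $q$-th order auto-correlation $\frac{1}{N}\sum_i \hat{M}_q(y_i)$ converges almost surely, by the strong law of large numbers, to $\frac{1}{K}\sum_k \hat{M}_q(x_k)$ plus a bias term $B_q$. To identify $B_q$ I would expand $\hat{M}_q(R_\ell x_k+\varepsilon)$ over the choices of signal versus noise in each of the $q$ factors and apply Wick's theorem to the Gaussian moments of $\varepsilon$: every term with an odd number of noise factors vanishes in expectation, and the surviving contractions produce a polynomial in $\sigma^2$ of degree at most $\lfloor q/2\rfloor$ whose coefficients are lower-order auto-correlations of the $x_k$. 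Since $\sigma^2$ (or its estimate) and those lower-order quantities are available, $B_q$ can be subtracted, and the scaling $N/\sigma^{2q}\to\infty$ is precisely what keeps the estimator variance, of order $\sigma^{2q}/N$, under control.

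The heart of the proof, and the main obstacle, is the algebraic identifiability statement generalizing Proposition~\ref{prop1}: the polynomial map $\Phi_q$ sending $(x_0,\ldots,x_{K-1})\in\R^{KL}$ to the averaged invariants of orders $1,\ldots,q$ should generically have fibers equal to the $G_{\Pi,L}$-orbits, as soon as $K=O(L^{q-2})$. The natural route follows~\cite{bandeira2017estimation}: since $G_{\Pi,L}$ is finite, generic finiteness of the fibers is equivalent to the Jacobian $d\Phi_q$ attaining rank $KL$ at a generic point, a condition certifiable by evaluating $d\Phi_q$ at a single randomly chosen signal via semicontinuity of the rank; promoting local injectivity to separation of generic orbits then uses that $\Phi_q$ is polynomial and its fibers are algebraic, as in that reference. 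The counting heuristic already fixes the threshold, because the $q$-th order auto-correlation of a length-$L$ signal has $\Theta(L^{q-1})$ algebraically independent entries after its intrinsic symmetries are removed, while the sub-signals carry $KL$ parameters, so one expects injectivity exactly when $KL\lesssim L^{q-1}$, i.e. $M=KL=O(L^{q-1})$. Turning this heuristic into a theorem is the hard part: for $q=3$ the rank condition was only verified computationally for $L\le 192$ and left open as Conjecture~\ref{conj1}, and for arbitrary $q$ one cannot enumerate all pairs $(q,L)$. A genuine proof would need structural invariant-theoretic input—for instance exhibiting a distinguished minor of $d\Phi_q$ that is a nonzero polynomial in the signal entries throughout the admissible range, or a separation-of-orbits result for the higher moments of Gaussian mixtures in the spirit of~\cite{katsevich2020likelihood}—rather than a direct extension of the $q=3$ computation.

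Finally, once generic identifiability of the orbit $G_{\Pi,L}x$ from the order-at-most-$q$ invariants is established, recovering $x$ itself is inherited unchanged from Section~\ref{sec:merge_signals}: Lemma~\ref{lemma:unique_perm} and Corollary~\ref{cor:prior}, or Proposition~\ref{prop:symmetry_in_circulant_sigma} for a circulant prior, single out for almost every Gaussian-drawn $x$ a unique minimizer of $x^T\Sigma^{-1}x$ within the finite orbit, and this step depends only on the prior, not on which moments were used. Combining the three stages identifies $M=O(L^{q-1})$ grid points whenever $N/\sigma^{2q}\to\infty$; letting $\sigma$ be fixed while $N\to\infty$ makes $N/\sigma^{2q}\to\infty$ for every $q$, so arbitrarily large $q$ is admissible and the achievable resolution is, in principle, unlimited at any fixed noise level. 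I expect the entire difficulty to concentrate in the third paragraph, with the rest being careful but routine extensions of the $q=3$ analysis.
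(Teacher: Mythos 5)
You should know at the outset that the paper does not prove this statement: it appears in Section~\ref{sec:future_work} explicitly as a conjecture, and the paper's entire ``argument'' for it is the same reduction-plus-extrapolation you outline (reduction to hMRA on the $K=M/L$ sub-signals, de-biased estimation of the $q$-th auto-correlation under $N/\sigma^{2q}\to\infty$, and an anticipated generalization of the identifiability result of~\cite{bandeira2017estimation} to order $q$, with the prior-based selection of Section~\ref{sec:merge_signals} appended). Your first, second, and fourth paragraphs faithfully reproduce that reasoning, and your third paragraph correctly locates where all the difficulty lives. But that third paragraph is also where your proposal stops being a proof: you concede that the generic identifiability of the orbit $G_{\Pi,L}x$ from the order-$\le q$ invariants in the regime $K=O(L^{q-2})$ would require ``structural invariant-theoretic input'' that you do not supply. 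That concession is the gap, and it is not a small one.

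Concretely, the obstruction is this: even for $q=3$, the identifiability threshold~\eqref{eq:Pl} is not established by a structural argument that one could hope to generalize. It rests on a \emph{computational} rank test (the Hessian test of~\cite{bandeira2017estimation}), executed pair-by-pair and verified only for $L\leq 192$ (Proposition~\ref{prop1}); beyond that range it is itself open (Conjecture~\ref{conj1}). So there is no ``$q=3$ proof'' whose steps can be redone at general $q$ --- the parameter-counting heuristic fixes the expected threshold $KL\lesssim L^{q-1}$, and semicontinuity of rank would indeed reduce everything to exhibiting a single signal at which $d\Phi_q$ has full rank, but producing such a certificate uniformly in $q$, $K$, and $L$ (e.g., your suggested distinguished nonzero minor of $d\Phi_q$) is precisely the unsolved problem. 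Additionally, be careful that full rank of $d\Phi_q$ only yields generically finite fibers (list recovery); upgrading this to the statement that generic fibers are exactly $G_{\Pi,L}$-orbits requires a separate orbit-separation argument in~\cite{bandeira2017estimation}, which likewise has no known extension here. In short: your routine steps (Wick-expansion de-biasing, prior selection) do transfer, your hard step is genuinely hard, and since the paper itself only conjectures the statement, no blind proposal along this route could have closed it.
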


\paragraph{Continuous super-resolution.}
A natural generalization of the model considered in this work is the following. Let $x: S^1\rightarrow\R$ be a band-limited signal on the circle~\eqref{eq:continuous_signal}, and let $R_\theta$ denote a  rotation, that is, $(R_\theta x)(t)= x(t-\theta)$, where  $\theta$ is  distributed uniformly on the circle. Together with i.i.d.\ Gaussian noise $\varepsilon\in \R^L$,  the data generative model reads
\begin{equation} \label{eq:model_continuous}
y = P(R_\theta x) + \varepsilon, \qquad  \theta\sim \text{Uniform}[0,1),\qquad \varepsilon\sim \mathcal{N}(0,\sigma^2 I),
\end{equation}
where  $P$ denotes a sampling operator that collects $L$ equally-spaced point-wise samples. 
The goal is to estimate $x$ from $N$ observations sampled from~\eqref{eq:model_continuous}. This setup is interesting in the sub-Nyquist regime, where $P$ samples $x$  below its Nyquist  sampling rate. While this model shares many similarities with~\eqref{eq:model}, it poses some additional challenges that are beyond the scope of this work; we intend to address them in a follow-up work. 

\paragraph{\rev{Super-resolution of images.}} While this paper deals with 1-D signals, the methodology can be extended to higher-dimensions. 
For example, an interesting MRA setup that was studied in~\cite{ma2018heterogeneous} considers rotating 2-D ``bandlimited'' images. 
Specifically, suppose that  an image $X$ belongs to the vector space of images that can  expanded by finitely many coefficients in a steerable basis (such as Fourier-Bessel~\cite{zhao2013fourier} or prolate spheroidal wave functions~\cite{landa2017approximation}):
\begin{equation}
X(r,\phi) = \sum_{k=1}^{K_{max}}\sum_{q=1}^{Q_{max}} a_{k,q} u_{k,q}(r,\phi),
\end{equation}
where  $(r,\phi)$ are polar coordinates, $a_{k,q}$ are the expansion coefficients, and  $u_{k,q}(r,\phi)$ are the basis functions of the steerable basis.
The images are acted upon by unknown elements of the group of in-plane rotations $SO(2)$.
The \rev{steerability} property implies that rotating an image by an angle $\alpha$ amounts to multiplying the expansion coefficients by $e^{\I k \alpha}$:
\begin{equation}
X(r,\phi-\alpha) = \sum_{k=1}^{K_{max}}\sum_{q=1}^{Q_{max}} a_{k,q}e^{-\I k\alpha} u_{k,q}(r,\phi).
\end{equation}
Accordingly, it is easy to see that the triple products $a_{k_1,q_1}a_{k_2,q_2}a_{-k_1-k_2,q_3}$ are invariant under rotations---these products form the bispectrum~\cite{zhao2013fourier,ma2018heterogeneous}. This in turn implies that for an image expanded by $M$ coefficients, there are $O(M^{5/2})$ bispectrum entries. 
In this case, our framework suggests that, perhaps,  one can identify an image from sufficiently many observations with merely $L=O(M^{-2/5})$ samples per observation. 

\rev{\paragraph{Super-resolution in high dimensions.}
	This work studies the finite-dimensional regime (finite~$M$) in which invariant features achieve the optimal estimation rate as SNR $\to0$. A recent work~\cite{romanov2020multi} uncovered that this is not the case in the high-dimensional regime $M\to\infty.$ In particular, it was shown that the parameter that controls the ``hardness'' of the model is $\alpha =M/(\log M \sigma^2) $: when $\alpha<2$ the samples complexity of the problem rapidly increases, whereas for $\alpha>2$ the effect of the unknown shifts is minor. 
	Nevertheless, the authors of~\cite{romanov2020multi} only investigated  random sub-sampling operators that do not include the super-resolution setup~\eqref{eq:model}.
	The high-dimensional regime is of particular interest since it seems  as a good model for  modern cryo-EM datasets, where the dimensionality and the number of samples are of the same order of a few millions. In fact, high-dimensional statistical analysis has been
	already proven effective for cryo-EM data processing. For example, a covariance estimation
	technique based on high-dimensional analysis (the so-called spiked model) has significantly improved
	image denoising~\cite{bhamre2016denoising}.		
}

\paragraph{Cryo-EM and XFEL.}
The main motivation of this work arises from cryo-EM and XFEL. 
The measurements in these applications  (under simplifying assumptions, see for example~\cite{bendory2019single})
agree with the general MRA model~\eqref{eq:model}, where $g\in SO(3)$ (the group of 3-D rotations),  the 3-D Fourier transform of the signal $x$ is assumed to be bounded in a ball (``bandlimited'' volume), and the linear operator $T$ collects samples of  the 2-D tomographic projection of the rotated volume. The question then would be whether the maximal resolution of a 3-D reconstruction algorithm can surpass the resolution dictated by the detectors acquiring the data---that is, the resolution of the 2-D tomographic projection images. A recent proof \rev{of} concept (on simulated  data) promises an affirmative answer~\cite{chen2018single}. Extending our analysis to this case requires sophisticated tools 
 and we leave it for a future research.

\section*{Acknowledgment}
The authors are grateful to Joseph Kileel and Dan Edidin for  insightful discussions about algebraic geometry. 
\rev{The authors also thank the anonymous reviewers for their
	valuable comments and suggestions.}
A.S. and W.L. were partially supported by NSF BIGDATA award IIS-1837992. 
T.B., N.S., and A.S. were partially supported by BSF grant no. 2019752, and NSF grant no. 2009753.  
W.L. and N.S. were partially supported by BSF grant no. 2018230.
A.S. was also partially supported by NIH/NIGMS award 1R01GM136780-01, award FA9550-17-1-0291 from AFOSR, the Simons Foundation Math+X Investigator Award, and the Moore Foundation Data-Driven Discovery Investigator Award.
T.B. was also partially supported by the Zimin Institute for Engineering Solutions Advancing Better Lives.

\bibliographystyle{plain}

\appendix


\section{Proof of Proposition~\ref{prop:symmetry_in_circulant_sigma} } \label{app:R_invariant}

 
{The} first lemma refines the condition of Lemma~\ref{lemma:unique_perm} {in terms of invariant subspaces. Recall that a subspace $V$ is $R$-invariant if $R(V) \subset V$.}
 \begin{lemma} \label{lemma:commutes}
 	Let $S$ be a symmetric matrix. Then, $R$ and $S$ commute {if and only if} every eigenspace is $R$-invariant. 
 \end{lemma}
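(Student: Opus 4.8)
The plan is to reduce the statement entirely to the spectral decomposition of $S$. Since $S$ is a real symmetric matrix, the spectral theorem guarantees that $\R^n$ splits as an orthogonal direct sum $\R^n = V_1 \oplus \cdots \oplus V_r$ of the eigenspaces $V_j = \ker(S - \lambda_j I)$ associated with the distinct eigenvalues $\lambda_1, \ldots, \lambda_r$. This is the only structural fact needed: it converts the commutation question into a statement about how $R$ moves these eigenspaces, and in particular it provides a spanning family of subspaces on which $S$ acts by a single scalar.

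For the forward implication, I would assume $RS = SR$ and take an arbitrary eigenvector $v \in V_j$, so that $Sv = \lambda_j v$. Then $S(Rv) = R(Sv) = \lambda_j (Rv)$, which exhibits $Rv$ as an eigenvector of $S$ with eigenvalue $\lambda_j$ (or else the zero vector), hence $Rv \in V_j$. Since $v$ was arbitrary, $R(V_j) \subseteq V_j$; as $j$ was arbitrary, every eigenspace is $R$-invariant.

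For the converse, I would assume $R(V_j) \subseteq V_j$ for every $j$ and argue that the two matrices agree on each summand; because the $V_j$ span $\R^n$, agreement on each summand forces $RS = SR$ globally by linearity. Concretely, for $v \in V_j$ we have $RSv = \lambda_j Rv$; on the other hand $Rv \in V_j$ by invariance, so $SRv = \lambda_j Rv$ as well, and the two expressions coincide. Decomposing an arbitrary vector along $V_1 \oplus \cdots \oplus V_r$ and summing then yields $RS = SR$.

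There is no genuine obstacle here: the claim is a clean consequence of the eigenspace decomposition, and the symmetry of $S$ enters only to guarantee that the eigenspaces span the whole space. The single point worth flagging is that $R$ need not be symmetric or orthogonal for either direction—only the diagonalizability of $S$ is used—so I would phrase the proof for a general matrix $R$, which in particular covers the permutation matrices $R = R_2 R_1^T$ arising in the application to Lemma~\ref{lemma:unique_perm}.
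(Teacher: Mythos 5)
Your proof is correct and follows essentially the same route as the paper's: both directions rest on the spectral decomposition of the symmetric matrix $S$, with the forward implication via $S(Rv)=R(Sv)=\lambda Rv$ and the converse via checking that $SR$ and $RS$ agree on each eigenspace and then extending by linearity. Your write-up is slightly more explicit about the final spanning argument and about $R$ being arbitrary (not necessarily orthogonal), but these are refinements of the same argument, not a different approach.
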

 \begin{proof}
 	Since $S$ is symmetric, we can decompose the space into real eigenspaces. 
 	First direction: if $R$ and $S$ commute then for any eigenvector $v$ of $S$ with eigenvalue $\lambda$, we have 
 	\[ S (Rv) = R S v = \lambda (Rv). \]
 	Namely, $Rv$ is also in the eigenspace. {Second direction:} if any eigenspace $V_\lambda$ is $R$-invariant then for any $v \in V_\lambda$  we can write $Rv$ in terms of the basis of $V_\lambda$, and to have
 	\[ Rv = \sum_i \alpha_i v_i \implies S (Rv) = S\left(\sum_i \alpha_i v_i \right) =  \sum_i \alpha_i \lambda  v_i  = \lambda Rv = R (Sv)  . \qedhere \] 
 \end{proof}

\begin{lemma}
\label{lemma:circulant}
Let $\Sigma^{-1}$ be a circulant matrix. Then 
\begin{enumerate}
\item
$\Sigma^{-1}$ commutes with any cyclic permutation matrix $R$. 

\item
Conversely, if each eigenvalue of $\Sigma^{-1}$ has multiplicity $1$  and $\Sigma^{-1}$ commutes with a permutation matrix $R$, then $R$ must be a cyclic permutation.
\end{enumerate}
\end{lemma}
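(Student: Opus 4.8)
The whole lemma rests on the spectral structure of circulant matrices: every circulant matrix is diagonalized by the unitary discrete Fourier transform matrix $F$. Concretely, let $\omega=e^{2\pi\I/n}$ and let $v_k=(1,\omega^k,\dots,\omega^{(n-1)k})^{T}$, $k=0,\dots,n-1$, be the Fourier modes, which form an orthogonal eigenbasis of $\C^n$ shared by \emph{every} circulant matrix. Writing $C$ for the elementary cyclic shift (the permutation matrix of the shift by one), every circulant matrix is a polynomial $\sum_{j} a_j C^{j}$ in $C$, and every cyclic permutation matrix is a power $C^{s}$. For the first claim I would simply note that $\Sigma^{-1}=\sum_j a_j C^j$ and $C^s$ both lie in the commutative algebra generated by $C$ (equivalently, both act diagonally in the Fourier basis $\{v_k\}$, and diagonal matrices commute), so $\Sigma^{-1}C^{s}=C^{s}\Sigma^{-1}$.

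For the converse I would run the invariant-subspace argument of Lemma~\ref{lemma:commutes} in the opposite direction. Since $\Sigma^{-1}$ is circulant with every eigenvalue of multiplicity one, its eigenspaces are exactly the one-dimensional lines spanned by the $v_k$. Commutation gives, for each $k$, $\Sigma^{-1}(Rv_k)=R(\Sigma^{-1}v_k)=\lambda_k(Rv_k)$, so $Rv_k$ again lies in the $\lambda_k$-eigenspace; as this space is one-dimensional, $Rv_k=\mu_k v_k$ for some scalar $\mu_k$. Hence $R$ is simultaneously diagonalized by the same Fourier basis, i.e.\ $R=F^{*}\operatorname{diag}(\mu_0,\dots,\mu_{n-1})F$, which is precisely the statement that $R$ is circulant.

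It then remains to show that a circulant matrix which is also a permutation matrix can only be a cyclic shift, and this last step is where I expect the only genuine care is needed. A circulant matrix is determined by its first row; if that row is forced by the permutation property to be a single standard basis vector $e_s$, the circulant structure propagates this lone nonzero entry diagonally across all rows exactly as in $C^{s}$, so $R=C^{s}$ is cyclic. I would emphasize that the multiplicity-one hypothesis is used in an essential way: it is what collapses the eigenspaces to lines and thereby forces $R$ to be diagonal (not merely block-diagonal) in the Fourier basis. Without it, $R$ could act as a nontrivial orthogonal map within a repeated eigenspace and fail to be circulant, so the conclusion would break down.
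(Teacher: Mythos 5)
Your proof is correct, but the way you close the converse is genuinely different from the paper's. Both arguments share the same first half: by Lemma~\ref{lemma:commutes} and the multiplicity-one hypothesis, every Fourier mode is an eigenvector of $R$, so $R$ acts by scalars on the DFT basis. From there the paper proceeds by explicit computation: tracking where the permutation sends index $1$ forces the scalars to be $\alpha_\ell = \omega^{-(\ell-1)k}$, and an entrywise calculation then shows $(Rw_\ell)_j = W_{j-k,\ell}$, i.e.\ $R$ shifts the entries of every basis vector by the same $k$, hence is a cyclic shift. You instead stop at the abstract statement that $R$ is simultaneously diagonalized by the DFT, identify ``diagonal in the Fourier basis'' with ``circulant,'' and finish combinatorially: a circulant matrix that is also a permutation matrix has first row equal to some standard basis vector $e_s$, and the circulant structure then forces $R = C^s$. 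Your route is cleaner and isolates a reusable fact; the paper's is self-contained, never needing that characterization. The one step you assert without justification is the reverse inclusion of the characterization you rely on---that every matrix of the form $F^{*}\operatorname{diag}(\mu_0,\dots,\mu_{n-1})F$ is circulant. This is standard, and follows either from a dimension count (the circulants form an $n$-dimensional space contained in the $n$-dimensional space of DFT-diagonal matrices, so the two coincide) or from Lagrange interpolation expressing any diagonal matrix as a polynomial in $\operatorname{diag}(1,\omega,\dots,\omega^{n-1})$; since it is exactly the pivot of your converse, it deserves a sentence. Your part 1 (both matrices lie in the commutative algebra generated by $C$) is the same idea as the paper's explicit eigenvector computation, just packaged more abstractly, and your closing remark on why multiplicity one is essential matches the paper's use of that hypothesis.
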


\begin{proof}
Let $W$ denote the DFT matrix, with entries  $W_{j,\ell} = \frac{1}{\sqrt{n}} \omega^{(\ell-1)(j-1)}$,  $\omega=e^{-2 \pi \I /n}$, $1 \le j,\ell \le n$. Then the columns $w_0,\dots,w_{n-1}$ of $W$ are the eigenvectors of $\Sigma^{-1}$. Furthermore, if $R$ is the cyclic permutation matrix corresponding to the permutation $j \mapsto j+k \mod n$, then $R w_{\ell} = \omega^{(\ell-1)k} w_\ell$. Consequently, $R$ preserves each eigenspace of $\Sigma^{-1}$, and so by Lemma~\ref{lemma:commutes} $R$ and $\Sigma^{-1}$ commute. This proves the first statement.

For the converse, suppose each eigenvalue of $\Sigma^{-1}$ has multiplicity $1$, and take any permutation matrix $R$ that commutes with $\Sigma^{-1}$. From Lemma~\ref{lemma:commutes},  $R$ must leave each eigenspace of $\Sigma^{-1}$ fixed; consequently, for each eigenvector $w_\ell$ of $\Sigma^{-1}$ there is a scalar $\alpha_\ell$ so that $R w_\ell = \alpha_\ell w_\ell$. Suppose the permutation corresponding to $R$ sends index $1$ to index $k+1$; then
\begin{align}
\frac{1}{\sqrt{n}} = W_{1,\ell} = (R w_\ell)_{k+1} = \alpha_\ell W_{k+1,\ell} = \alpha_\ell \frac{1}{\sqrt{n}}\omega^{(\ell-1)k},
\end{align}
and so $\alpha_\ell = \omega^{-(\ell-1)k}$. Since $R w_\ell = \alpha_\ell w_\ell$, for any index $j$,
\begin{align}
(R w_\ell)_j = \omega^{-(\ell-1)k} W_{j,\ell} = \omega^{-(\ell-1)k} \frac{1}{\sqrt{n}}\omega^{(\ell-1)(j-1)}
= \frac{1}{\sqrt{n}} \omega^{(\ell-1)(j-k - 1)}
= W_{j-k,\ell},
\end{align}
meaning that $R$ cyclically shifts the entries of $w_\ell$ by $k$. Since this holds for all basis vectors $w_\ell$, $R$ is a cyclic shift.
\end{proof}

We may now prove Proposition~\ref{prop:symmetry_in_circulant_sigma}. The first statement of Proposition~\ref{prop:symmetry_in_circulant_sigma} is identical to the first statement of Lemma \ref{lemma:circulant}. For the second statement, Lemma~\ref{lemma:unique_perm} tells us that for almost every signal $x$, the quadratic form $y^T  \Sigma^{-1} y$ takes on distinct values on each equivalence class of vectors in the orbit $G_{\Pi,L}x$ (where two vectors are equivalent if one is a cyclic shift of the other). Indeed, for any two permutation matrices $R_1$ and $R_2$, the set $\left\lbrace x \mid x^T R_1^T \Sigma^{-1} R_1 x = x^T R_2^T \Sigma^{-1} R_2 x \right\rbrace$ {has measure zero} if and only if $R_2 R_1^T$ does not commute with $\Sigma^{-1}$; from Lemma~\ref{lemma:circulant}, this latter condition is equivalent to  $R_2 R_1^T$ not being a cyclic permutation. Since there are only finitely many permutation matrices, the set
\begin{align}
 \left\lbrace x \mid  \exists R_1, R_2 \text{ s.t. } R_2 R_1^T \text{ not cyclic and } x^T R_1^T \Sigma^{-1} R_1 x = x^T R_2^T \Sigma^{-1} R_2 x   \right\rbrace
\end{align}
also has measure $0$. Consequently, for almost every signal $x$, the equality $x^T R_1^T \Sigma^{-1} R_1 x = x^T R_2^T \Sigma^{-1} R_2 x$ can hold only when $R_2 R_1^T$ is cyclic, i.e.\  when $R_1 x$ and $R_2 x$ are in the same equivalence class of $G_{\Pi,L}x$.

Because the quadratic form $y^T \Sigma^{-1} y$ takes on distinct values on each equivalence class in $G_{\Pi,L}x$ for almost every $x$, it immediately follows that for almost every $x$ the minimum of $y^T \Sigma^{-1} y$ over equivalence classes in $G_{\Pi,L}x$ is unique. This is the desired result.

\end{document}